\title{A Logic-Based Framework for Database Repairs}
\providecommand{\institute}[1]{%
  \apptocmd{\@author}{\end{tabular}
    \par\bigskip
    
    \begin{tabular}[t]{c}
    #1}{}{}
}
\author{
     Nicolas Fröhlich$^1$
     \and
     Arne Meier$^1$
     \and
     Nina Pardal$^2$
     \and
     Jonni Virtema$^3$
 }
 \institute{\small $^1$ Leibniz Universität Hannover, Germany, $\{$\texttt{nicolas.froehlich, meier}$\}$\texttt{@thi.uni-hannover.de}\\\small $^2$ University of Huddersfield, UK, \texttt{n.pardal@hud.ac.uk}\\\small $^3$ University of Sheffield, UK, \texttt{j.t.virtema@sheffield.ac.uk}}
\newtheorem{theorem}{Theorem}
\newtheorem{definition}[theorem]{Definition}
\newtheorem{example}[theorem]{Example}
\newtheorem{proposition}[theorem]{Proposition}
\newtheorem{lemma}[theorem]{Lemma}
\newtheorem{remark}[theorem]{Remark}
\newcommand{\LAV}{\mathsf{LAV}}
\newcommand{\GAV}{\mathsf{GAV}}
\newcommand{\NP}{\mathsf{NP}}
\newcommand{\co}{\mathsf{co}}
\newcommand{\coNP}{\co\NP}
\newcommand{\COQ}{\mathsf{COQ}}
\newcommand{\db}{\mathfrak{D}} %
\newcommand{\dbs}{\mathrm{DBS}} %
\newcommand{\schema}{\mathcal{S}}
\newcommand{\LL}{\mathcal{L}} %
\newcommand{\ans}{\mathrm{Ans}} %
\newcommand{\FO}{\mathrm{FO}} %
\newcommand{\adom}{\mathrm{Adom}} %
\newcommand{\lits}{\mathrm{Lit}} %
\newcommand{\atoms}{\mathrm{Atoms}}  %
\newcommand{\rep}{\mathcal{R}}  %
\newcommand{\fo}{\mathrm{FO}}  %
\newcommand{\false}{\mathit{false}}
\newcommand{\type}{\mathrm{Type}}
\newcommand*{\ldblbrace}{\left\{\mskip-6mu\left\{}
\newcommand*{\rdblbrace}{\right\}\mskip-6mu\right\}}
\newcommand{\multiset}[1]{\ldblbrace#1\rdblbrace}
\newcommand{\eval}[3]{\left\llbracket{#1}\right\rrbracket_{{#2},{#3}}}
\newcommand{\DBeval}[2]{\left\llbracket{#1}\right\rrbracket_{{#2}}}
\newcommand{\evalX}[2]{\left\llbracket{#1}\right\rrbracket_{{#2}}}
\newcommand{\BevalX}[2]{\left\llbracket{#1}\right\rrbracket^B_{{#2}}}
\newcommand{\mA}{\mathfrak{A}}
\newcommand{\Dom}{\mathrm{Dom}}
\newcommand{\Ran}{\mathrm{Ran}}
\newcommand{\dfn}\coloneqq
\newcommand{\Var}{\mathrm{Var}}
\newcommand{\Lang}{\mathcal{L}}%
\newcommand{\ICs}{\mathcal{C}}%
\newcommand{\HICs}{\mathcal{C}_h}%
\newcommand{\SICs}{\mathcal{C}_s}%
\newcommand{\HQL}{\mathcal{Q}_h}%
\newcommand{\SQL}{\mathcal{Q}_s}%
\newcommand{\IM}{\mathcal{M}}%
\newcommand{\fsc}{f_{sc}}%
\DeclareMathOperator{\agg}{\sigma}%
\newcommand{\DB}{\db}%
\newcommand{\CQA}{\mathrm{CQA}}%
\newcommand{\RCQA}{\mathcal{R}\text{-}\mathrm{CQA}}%
\newcommand{\nnf}{\mathrm{nnf}}
\newcommand{\ar}{\mathrm{ar}}
\newcommand{\support}{\mathrm{Supp}}
\newcommand{\Att}{\mathrm{Att}}
\newcommand{\repairframeworkname}{repair framework\xspace}%
\newcommand{\repairname}{\ensuremath{\mathcal{R}\text-\mathrm{repair}}\xspace}%
\newcommand{\repairnames}{\ensuremath{\mathcal{R}\text-\mathrm{repairs}}\xspace}%
\newcommand{\STOCK}{{\normalfont\texttt{STOCK}}\xspace}
\newcommand{\ID}{{\normalfont\texttt{ID}}\xspace}
\newcommand{\Warehouse}{{\normalfont\texttt{Warehouse}}\xspace}
\newcommand{\BUILDINGS}{{\normalfont\texttt{BUILDINGS}}\xspace}
\begin{document}
\maketitle

\begin{abstract}
  We introduce a general abstract framework for database repairs, where the repair notions are defined using formal logic.
  We distinguish between integrity constraints and so-called query constraints.
  The former are used to model consistency and desirable properties of the data (such as functional dependencies and independencies), while the latter relate two database instances according to their answers to the query constraints.
  The framework allows for a distinction between hard and soft queries, allowing the answers to a core set of queries to be preserved, as well as defining a distance between instances based on query answers.
  We illustrate how different notions of repairs from the literature can be modelled within our unifying framework.
  The framework generalises both set-based and cardinality based repairs to semiring annotated databases.
  Furthermore, we initiate a complexity-theoretic analysis of consistent query answering and checking existence of a repair within the framework.\looseness=-1
\end{abstract}

\section{Introduction}

Inconsistency is a common phenomenon when dealing with large collections of data.
In real-world applications, data is often made available from non-trustworthy sources
resulting in very diverse quality of data and leading to problems
related to the integrity of databases and repositories.
\emph{Database repairing}, one of the main approaches for dealing with inconsistency, focuses on frameworks that allow inconsistencies to be identified in order to then obtain a database that satisfies the constraints imposed.
The usual approach is to search for a 'similar' database that satisfies the constraints.
This new database is called a \emph{repair}, and in order to define it properly one must determine the meaning of `similar'.
Repairs have been studied from different perspectives and several frameworks have been presented, including the introduction of preference criteria represented by weights, as well as both soft and hard constraints.
Another approach to deal with inconsistency is to use an inconsistency measure, which can be either a function that counts the number of integrity constraints violated, an abstract non-negative mapping to some partial order, or even a numerical measure based on an abstract repair semantics.
In this setting, a repair need not satisfy all the required integrity constraints but instead repairs are tolerant for a certain amount of inconsistency.

A database may admit multiple (a priori) incomparable repairs in the sense that it is not always clear what is the best way to repair an inconsistent database. \emph{Consistent Query Answering} (CQA) aims to generalise the notion of cautious reasoning (or certain answers, in database parlance) in the presence of inconsistent knowledge.
In the setting of CQA, a ``valid'' answer to a query is one that can be found in every possible repair.
This problem has been analysed for different data models and different types of integrity constraints, under the most prominent repair semantics.

Data provenance provides means to describe the origins of data, allowing to give information about the witnesses to a query, or determining how a certain output is derived.
Provenance semirings were introduced by Green~et~al.~\cite{GreenKT07} to devise a general framework that allows to uniformly treat extensions of positive relational algebra, where the tuples have annotations that reflect very diverse information. Some motivating examples of said relations come from incomplete and probabilistic databases, and bag semantics.
The framework captures a notion of data provenance called \emph{how-provenance}, where the semiring operations essentially capture how each output is produced from the source.
Subsequently semiring semantics for full first-order logic ($\FO$) were developed by Grädel and Tannen~\cite{gradelarxiv17}.
The  semantics refines the classical Boolean semantics by allowing formulae to be evaluated as values from a semiring.
If $K$ is a semiring, then a $K$-relation is a relation whose records are annotated with elements from $K$.

In this paper, we consider repairs and consistent query answering in the setting of $K$-databases, which are relational databases, whose tables are $K$-relations.
This is a general approach that encompasses relational databases under set and bag semantics, as well as repairs in conjunction with $K$-relations that encode some provenance data.

\paragraph{Our contribution.}
We present an abstract framework for defining database repairs that allows us to unify and simultaneously incorporate diverse notions that have been used in the literature to deal with inconsistency.
We allow a distinction between integrity constraints classifying properties of data as either necessary or merely desirable to preserve in a repair.
The latter are used in our framework to define a measure for inconsistency-tolerant repairs.
The repair notions in our framework are expressed through the preservation of a given core set of query answers; together with integrity constraints, this yields the space for possible repairs.
The distance between databases is computed using distances between the answers of specified queries in each instance.
The technical definitions are presented in Section~\ref{sec:repairframework}.
We show examples of how well-known repair notions from literature can be expressed in our framework and exemplify the flexibility of the framework by generating novel repair notions.

Our framework enables us to simultaneously prove complexity results for a wide family of repair notions.
We exemplify this in the simple setting of set-based databases, where we obtain results for the complexity of the most important computational problems (for a thorough exposition, see Sec.~\ref{sec:complexity-simple-repair} in the appendix).
Our results extend and generalise existing results on the complexity of repairing and pinpoint problems that are complete for the first and second level of the polynomial hierarchy (see Section~\ref{sec:classicalrepairs}).

\paragraph{Related work.} %
Most of the previous research on repairs has been conducted within the data management community, while the problem of measuring inconsistency has been approached mostly by the knowledge representation community.
Some works use a logic-based approach to explore reasoning under inconsistencies in knowledge bases and ontologies as a means to compute consistent answers and repairs  %
(see, e.g.,\ \cite{GrecoGZ03,SubrahmanianA07,ZhangWWMQF17}).
Burdick~et~al.~\cite{BurdickFKPT19} address the problem of repairing using entity linking, using weights and consistent answers to determine the strength of the links, thus allowing the repair to be performed only on a given part of the schema.
In the repair literature, a set of integrity constraints and a distance between instances is always presented.
In some cases, the distance is represented by an inconsistency measure.
However, while there are approaches to repairs that reflect the evaluation of important queries as criteria for determining preference, these usually focus on how to resolve conflicts and provide meaningful query answers despite the inconsistency~\cite{CalauttiGMT22, StaworkoCM12}.
On the other hand, the idea of having a core set of query answers that need to be preserved is similar to Belief Revision (BR) concepts.
The AGM theory for Belief Change, defined by Alchourron~et~al.~\cite{AlchourronGM85}, represents \emph{belief sets} as sets of formulae closed under a consequence operator, and a list of postulates describes how the revision operator incorporates new information, even if this information is inconsistent with what is believed.
Guerra and Wassermann~\cite{GuerraW19} give a characterisation of model repair in terms of BR by introducing a new postulate that preserves the core of a belief set in the repair model.
We are not aware of existing repair frameworks that simultaneously incorporate the impositions given by a set of integrity constraints and preserve the answers of a given set of queries, as well as allow the introduction of shades of inconsistency or truth as part of an inconsistency measure or minimality criteria.

Ten Cate~et~al.~\cite{tenCate:2012} gave a systematic study of the data complexity of the CQA
for set-based repairs.
The restrictions imposed to the integrity constraints arise from classes of tuple-generated dependencies, vital in data exchange and data integration.
Their results of can be framed as results concerning particular $\mathcal{R}$-repairs; by refining the components of our framework, our results generalise their results. %

\section{Preliminaries}\label{sec:preliminaries}
We write $\vec{a}$ to denote a finite tuple $(a_1,\dots,a_n)$ of elements from some set $A$.
A \emph{multiset} is a generalisation of a set that keeps track of the multiplicities of its elements.
We write $\multiset{ \cdots }$ to denote a multiset whose elements are written between the double curly brackets.
E.g., the multiset $\multiset{a,a,b}$ has two copies of $a$ and a single copy of $b$.
The \emph{support} $\support(A)$ of a multiset $A$ is the underlying set of $A$.  E.g., $\support(\multiset{ a,a,b}) = \{a,b\}$.
A set or a multiset $A$ is finite if its cardinality $\lvert A\rvert$ is a natural number.

\begin{definition}
A \emph{semiring} is a tuple $(K,+,\cdot,0,1)$, where $+$ and $\cdot$ are binary operations on a set $K$, $(K,+,0)$ is a commutative monoid with identity element $0$, $(K,\cdot ,1)$ is a monoid with identity element $1$, `$\cdot$' is distributive over `$+$', and $x \cdot 0 =0= 0\cdot x$ for all $x \in K$.
A semiring is \emph{commutative} if $(K,\cdot,1)$ is a commutative monoid.
\end{definition}
A semiring $K$ has \emph{divisors of $0$}, if  $ab=0$ for some non-zero $a,b\in K$.
It is \emph{$+$-positive} if $a+b=0$ implies that $a=b=0$.
A semiring is \emph{positive} if it is both $+$-positive and has no divisors of $0$.
E.g., the modulo two integer semiring $\mathbb{Z}_2$ is not positive since it is not $+$-positive (even though it has no divisors of $0$), while $\mathbb{Z}_4$ has divisors of $0$.
Throughout the paper, we consider only partially ordered commutative semirings which are positive and have $0$ as their minimum element (e.g., all naturally ordered positive semirings satisfy this). We write $<$ for the order relation.
The \emph{Boolean semiring} $(\mathbb{B},\lor,\land,0,1)$ models logical truth or set-based data, and is formed from the two-element Boolean algebra. It is the simplest example of a semiring that is not a ring.
The \emph{semiring of natural numbers} $(\mathbb{N},+,\cdot,0,1)$ consists of natural numbers with their usual operations and can be used, e.g., to model multisets of data.
The \emph{probability semiring} $(\mathbb{R}_{\geq 0},+,\cdot,0,1)$ consists of the non-negative reals with standard addition and multiplication.

An \emph{aggregate function} $\agg$ (for a semiring $K$) is a function that maps multisets of elements of $K$ into an element of $K$.
For instance, the sum and product of the elements in a multiset are aggregate functions.

\begin{definition}
Let $K$ be a semiring, $A$ be a set, and $n\in\mathbb{N}$.
An \emph{$n$-ary $K$-relation} is a function $R\colon B \to K$, where $B\subseteq A^n$.
The \emph{support} of $R$ is $\support(B) \dfn \{b\in B \mid R(b)\neq 0\}$, and $B$ is often identified with $A^n$ via $R(b)= 0$ for $b\in A^n\setminus B$.
\end{definition}
Note that $K$-relations, for $K=\mathbb{B}$  or  $K=\mathbb{N}$, are essentially sets and multisets, respectively.

\subsection{Relational structures and databases}
A \emph{finite purely relational vocabulary} is a finite set $\{R_1,\dots, R_n\}$ of relation symbols; each with a fixed \emph{arity} $\ar(R_i)\in\mathbb{N}$.
We consider relational vocabularies $\tau$ extending finite purely relational vocabularies with a countable set of constant symbols $c_i$, for $i\in\mathbb{N}$.
A \emph{$\tau$-interpretation} $I$ maps each $n$-ary relation symbol to an $n$-ary relation $I(R_i)\subseteq A^n$ and each constant symbol $c$ to some element $I(c)\in A$, over some \emph{domain} set $A$.
If instead, the interpretations of $n$-ary relation symbols are $K$-relations with support from $A^n$, we call $I$ a \emph{$(\tau,K)$-interpretation}.
A \emph{finite $\tau$-structure} (or finite \emph{$(\tau,K)$-structure}, resp.) $\mA$ consists of a finite domain set $A$ and the interpretations of the symbols in $\tau$, which in the case of $(\tau,K)$-structures are $K$-relations.
We write $R^\mA$ and $c^\mA$ to denote the interpretations $I(R)$ and $I(c)$ of $R$ and $c$ in $\mA$, respectively.

Formally, a \emph{database schema} $\schema$ extends a relational vocabulary $\tau$ with a finite set $\Att$ of attributes and countable sets of possible values $\Ran(x)$, for each $x\in \Att$.
The \emph{domain} of the schema $\Dom(\schema)$ consists of the union of $\Ran(x)$ for $x \in \Att$.
Sometimes we write $\Dom$ instead of $\Dom(\schema)$. %
Furthermore, $\schema$ fixes the type $\type(R_i)\in \Att^n$ of each $n$-ary relation symbol $R_i$ and the type $\type(c)\in \Att$ of each constant symbol $c$.
A \emph{database} $\db$ %
is obtained via an interpretation that maps each constant symbol $c\in \tau$ to an element $c^\db\in\Ran(\type(c))$ and each $n$-ary relation symbol $R\in \tau$ to a finite relation  $R^\db\subseteq \Ran(\type(R))$, where $\Ran(x_1,\dots,x_n)$ is defined as $\Ran(x_1) \times \dots \times \Ran(x_n)$.
The \emph{active domain} $\adom(\db)$ of a database $\db$ %
is the smallest finite set such that $R^\db\subseteq \adom(\db)^{\ar(R)}$ and $c^\db\in{\adom(\db)}$ for each relation symbol $R$ and constant symbol $c$ of the schema.
We denote the set of all databases over $\schema$ by $\dbs(\schema)$ and write $\dbs$ when the context allows. %
For a semiring $K$ and database schema $\schema$, a \emph{$K$-database} is obtained from a database of schema $\schema$ by reinterpreting its relation symbols by $K$-relations of the same support.
Instances $\db$ of a database schema can naturally be interpreted as finite structures over the underlying schema vocabulary, extended by unary relation symbols $R_x$, for each $x\in \Att$, interpreted as $\Ran(x)\cap \adom(\db)$.
The domain of the finite structure corresponding to a database $\db$ is $\adom(\db)$, written sometimes with $D$.
From now on, we identify finite structures with databases.\looseness=-1

Let $\db$ be a database ($K$-database, resp.) and $R$ be a relation ($K$-relation, resp.) of the database.
The atomic formula $R(\vec{a})$ is called a \emph{fact}, if $\vec{a}\in R^\db$ (resp., $\vec{a}\in \support(R^\db)$).
Similarly, $\neg R(\vec{a})$ is a \emph{negated fact}, if $\vec{a}\not\in R^\db$ ($\vec{a}\not\in \support(R^\db)$, resp.).
If $R(\vec{a})$ is a fact, then $\vec{a}$ is a \emph{record} of $R$ and $\db$.
In the set-based environment, we sometimes define databases by listing all the facts in the database tables.
We use set comparison symbols and operations, e.g., \emph{symmetric difference} $\oplus$ and $\subseteq$ on databases and sets. 
Table~\ref{tab:products} serves as an illustration of an annotated database.

\begin{table}[t]
    \centering
    \begin{tabular}{cccc}
    \multicolumn{4}{c}{Table \texttt{STOCK}}\\
    \toprule
    \textbf{ID} & \textbf{Product} &  \textbf{Warehouse} & \textbf{\#}\\\cmidrule[.5pt]{1-3}
    112 & potato & A & 4\\
    112 & cabbage & A & 6\\
    113 & carrot & B & 7\\
    \bottomrule
    \end{tabular}
    \quad
    \begin{tabular}{ccc}
    \multicolumn{3}{c}{Table \texttt{BUILDINGS}}\\
    \toprule
    \textbf{Name} & \textbf{Address} & \textbf{\#}\\\cmidrule[.5pt]{1-2}
    A & 5 Regent St.  & 1\\
    C & 2 Broad Ln.  & 1\\
    D & 14 Mappin St. & 1\\
    \bottomrule
    \end{tabular}
    \caption{An example of a product $\mathbb{N}$-database $\db$ with quantities modelled via annotations in the semiring of natural numbers, where \STOCK is a ternary and \BUILDINGS a binary $\mathbb N$-relation. As this will serve as a continuing example later, the duplicated IDs are intended.}
    \label{tab:products}
\end{table}

\subsection{Logics and query languages}

We consider logics that are syntactic fragments and extensions of first-order logic and whose satisfaction relations are defined over databases or $K$-databases.
More importantly, formulae of these logics have well-defined notions of \emph{free} and \emph{bound variables}.
Formulae without free variables are called \emph{sentences}.
Logic formulae are naturally interpreted as database queries; sentences are \emph{Boolean queries} and formulae with $k$ free variables are \emph{$k$-ary queries}.
We write $\Var$ for the set of first-order variables.
If $\phi$ is a formula with free variables in $\vec{x}$, and $\vec{a}$ is a tuple of domain elements of the same length, we write $\phi(\vec{a}/\vec{x})$ to denote that the variables $\vec{x}$ are interpreted as $\vec{a}$.
If $s\colon \Var \to D$ is a variable assignment, $s[a/x]$ is the assignment that agrees otherwise with $s$ but maps $x\mapsto a$.

The two main approaches to defining logics for $K$-databases are the logics for \emph{meta-finite structures} of Gr\"adel and Gurevitch \cite{GradelG98} and the use of semiring semantics as defined by Gr\"adel and Tannen \cite{gradelarxiv17}.
We adopt an approach close to the latter, where in addition to having $K$-relations, ``truth'' values of sentences are also elements of a semiring.
$K$-databases can be seen as a special kind of model-defining $K$-interpretations of Gr\"adel and Tannen.

\begin{definition}\label{def:fointer}
Let $(K,+,\cdot, 0, 1)$ be a semiring, $\db$ be a $K$-database, and $s\colon \Var \to D$ be an assignment. The value $\eval{\phi}{\db}{s}$ of a formula $\phi\in \FO$ under $s$ is defined as follows:
\begin{align*}%
     \eval{R(\vec{x})}{\db}{s}
     &{=} R^\db(s(\vec{x})),
     &\hspace*{-4mm}\eval{\phi \land \psi}{\db}{s}
     &{=} \eval{\phi}{\db}{s} \cdot \eval{ \psi}{\db}{s} ,  
     &\hspace*{-4mm}   \eval{\phi \lor \psi}{\db}{s},
     &{=} \eval{\phi}{\db}{s} + \eval{ \psi}{\db}{s},
      \\
     \eval{\neg \phi}{\db}{s}
     &{=} \eval{\nnf(\neg\phi)}{\db}{s},
     &\hspace*{-4mm}\eval{\forall x\phi}{\db}{s}
     &{=} \prod_{a \in D} \eval{\phi }{\db}{s[a/x]},
     &\hspace*{-4mm}   \eval{\exists x\phi}{\db}{s}
     &{=} \sum_{a \in D} \eval{\phi }{\db}{s[a/x]},
\end{align*}
 \vspace*{-0.5cm}
\begin{align*}
    \eval{\neg R(\vec{x})}{\db}{s} 
    &= 
     \begin{cases}
     1, &\hspace{-2mm}\text{if }\eval{R(\vec{x})}{\db}{s} = 0,\\
     0, &\hspace{-2mm}\text{otherwise.}
     \end{cases},
    \hspace*{-.3cm}
    &\eval{x \sim y}{\db}{s}
    &= 
     \begin{cases}
     1, &\hspace{-2mm}\text{if }s(x)\sim s(y),\\
     0, &\hspace{-2mm}\text{otherwise.}
     \end{cases},
\end{align*}
where ${\sim}\in \{=, \neq\}$ and $\nnf(\neg\phi)$ is the formula obtained from $\neg\phi$ by pushing all the negations to atomic level using the usual dualities.
We use the well-known shorthands $\phi\to \psi$ and  $\phi\leftrightarrow \psi$.
For sentences $\phi$, we write $\evalX{\phi}{\db}$ as a shorthand for $\eval{\phi}{\db}{s_\emptyset}$, where $s_\emptyset$ is the empty assignment.
We write $\db,s \models \phi$ ($\db \models \phi$, resp.), if $\eval{\phi}{\db}{s}\neq 0$ ($\DBeval{\phi}{\db}\neq 0$, resp.). We define $\BevalX{\phi}{\db} = 1$, if $\db,s \models \phi$, and $\BevalX{\phi}{\db} = 0$ otherwise.
\end{definition}

If $\db$ is a database instance and $\phi$ is a query, then a \emph{query answer} over a database $\db$ is a tuple of elements $\vec{a}$ from $\adom(\db)$ such that $\db \models \phi(\vec{a} / \vec{x})$.
We write $\ans(\db, \phi) \dfn \{ \vec{a} \in D^{n} \mid \db \models \phi(\vec{a} / \vec{x}) \}$ to denote the set of answers to the query $\phi$ in the database $\db$.
We set
$
w_{\db, \phi}(\vec{a}) \dfn \DBeval{\phi(\vec{a} / \vec{x})}{\db}
$
to indicate the annotated answers to the query in the database.

\subsection{Repairs and consistent query answering}  \label{sec:repair_intro}
\emph{Integrity constraints (ICs)} are sentences in some logic that describe %
the necessary properties the data should comply. %
Let $\schema$ be a database schema.
Given a set of ICs $\ICs$, we say that $\DB\in\dbs(\schema)$ is \emph{consistent (w.r.t.\ $\ICs$)} if $\DB$ satisfies $\ICs$, that is, $\DB \models \phi$ for every $\phi \in \ICs$.
Otherwise, $\DB$ is \emph{inconsistent}.
A set of ICs is \emph{consistent} if there exists a database instance $\DB$ that makes the ICs true.
Given a distance $d$ between instances and $\DB_1, \DB_2, \DB_3 \in \dbs$, we write $\DB_2 \leq_{d,\DB_1} \DB_3$ if $d(\DB_1, \DB_2) \leq d(\DB_1, \DB_3)$.
A database $\DB_2$ is a \emph{repair} of $\DB_1$, if $\DB_2$ is $\leq_{d,\DB_1}$-minimal in the set $\{ \DB\in\dbs(\schema) \mid \DB \models \ICs \}$.

Set-based repairs are one of the most prominent types of repairs. %
The goal is to find a consistent database instance, with the same schema as the original one, that satisfies the repair semantics (e.g., \emph{subset}, \emph{superset} and \emph{symmetric difference}) and differs from the original by a minimal set of records under set inclusion.
Given a set of ICs $\ICs$, the instance $\DB'$ is a \emph{symmetric difference repair} of $\DB$ w.r.t.\ $\ICs$ if $\DB' \models \ICs$ and there is no instance $\DB''$ such that $\DB'' \models \ICs$ and $\DB \oplus \DB'' \subsetneq \DB \oplus \DB'$.
The instance $\DB'$ is a \emph{subset repair} of $\DB$ w.r.t.\ $\ICs$ if $\DB' \subseteq \DB$ and $\DB'$ is a symmetric difference repair of $\DB$.
Superset repair is defined analogously.
Cardinality-based repairs as defined in~\cite{LopatenkoB07} aim to find repairs of the original database that minimise the cardinality of the symmetric difference between instances.

Related to the problem of repairs, \emph{consistent query answering} (CQA) %
suggests that a database query’s answer should remain valid across all possible database repairs.
More formally, given a database $\DB$, a set of ICs $\ICs$, and a query $q(\vec{x}) \in \Lang$, the consistent answers to $q$ w.r.t.\ $\DB$ and $\ICs$, denoted by $CQA(q, \DB, \ICs)$, is the set of tuples $\vec{a}$ such that $\vec{a} \in \ans(\DB',q)$ for each repair $\DB'$ of $\DB$ w.r.t.\ $\ICs$.
For Boolean queries $q$ computing the consistent answers of $q$ is equivalent to determining if $\DB' \models q$ for every repair $\DB'$. Hence, $CQA(q, \DB, \ICs)$ contains $\top$ (resp., $\bot$) if $\DB' \models q$ (resp., $\DB' \not\models q$) for all repairs $\DB'$  of $\DB$.
CQA has been studied across various data models, repair semantics, and IC types, especially where the repair problems are well-explored.

Repairs are instances that eliminate all inconsistency of the original database.
One could relax this notion and simply look for a ``reduction'' of the inconsistency to some acceptable level, %
giving way to the notion of \emph{inconsistency-tolerant repairs}.
In this context, it is enough to find a \textit{nearly} consistent new instance without adding any constraint violations.
To this end, one formalises measuring inconsistency via \emph{inconsistency measures}:
These are typically $\mathbb R$-valued functions that meet certain postulates, which often vary with chosen semantics.
This concept has been explored for both database repairs and query answering over inconsistent knowledge bases and ontologies~\cite{Decker17,DBLP:journals/ai/LukasiewiczMMMP22, DBLP:conf/sebd/LukasiewiczMV22, DBLP:conf/ijcai/YunVCB18}.
Inconsistency can be measured using %
a general definition taking values in a semiring, not just $\mathbb R_{\geq 0}$. Inconsistency measures should meet certain postulates, including the \emph{consistency postulate} ``$\DB\models \ICs$ implies $\IM(\ICs, \DB)=0$''.

\begin{definition}%
Given a set of ICs $\ICs$ and an instance $\DB$, an \emph{inconsistency measure (IM)} $\IM$ is a function that maps $(\ICs, \DB)$ to a non-negative value in a semiring $K$.
\end{definition}

\section{Unified framework for repairs}\label{sec:repairframework}
Database repairing always incurs a cost, reflected by the number of record changes needed for consistency. Some changes may be costlier, and some facts may need to remain unchanged. Sometimes, tolerating inconsistency is preferable to the repair cost, thus requiring various repair approaches.
Inconsistency measures, usually based on database ICs, can vary in flexibility.
We divide ICs into those that \emph{must} be satisfied (\emph{hard-constraints $\HICs$}), and those for which some degree of inconsistency is allowed (\emph{soft-constraints $\SICs$}). Hard-constraints represent those properties usually referred to as ``integrity constraints''.
We suggest a fine-grained repair framework utilising both hard and soft constraints, and including hard-queries ($\HQL$) for further constraining the repair space and soft-queries ($\SQL$) for defining the minimality criteria.

\paragraph{Hard-Queries ($\HQL = (\HQL^+, \HQL^-)$).}
These yield a core set of answers that we want to preserve in a repair, both in the positive and negative sense.
For any Boolean query $\phi \in \HQL^+$, if $\DB'$ is a repair of $\DB$, we want $\DB\models \phi$ to imply $\DB'\models\phi$.
For non-Boolean queries, we want the answers to a hard-query in $\DB$ to be also retrieved in $\DB'$. 
Formally (in the setting of Boolean annotations),
we require $ Ans(\DB, \phi)  \subseteq Ans(\DB', \phi)$.
If the annotations are non-Boolean, the above notion is generalised to reflect the annotations.
Moreover, we require that for all Boolean queries $\psi \in \HQL^-$, $\DB\not\models \psi$ implies $\DB'\not\models\psi$, and for non-Boolean queries, we want that all answers to a hard-query in $\DB'$ are already answers in $\DB$.
Formally, for every $\phi \in \HQL^+$, $\psi \in \HQL^-$ and
$\vec{a} \in D^n$, we require that $w_{\DB, \phi}(\vec{a}) \leq w_{\DB', \phi}(\vec{a})$ and $w_{\DB, \psi}(\vec{a}) \geq w_{\DB', \psi}(\vec{a})$.
E.g., if annotations reflect multiplicities in a multiset, the corresponding multiplicities for answers of $\HQL^+$ (resp.\ $\HQL^-$) can only increase (decrease) in repairs.

\paragraph{Soft-Queries ($\SQL$).}
These reflect query answers which are deemed important, but not necessary to maintain.
Given a database $\DB$ they define a partial order $\leq_{\SQL, \DB}$ 
between database instances reflecting how close the instances are from $\DB$ with respect to the answers to queries in $\SQL$.
To simplify the presentation, we define the notion for Boolean queries, but will later extend this to formulae as well.
Let $\SQL = \{ \varphi_1, \ldots, \varphi_n \}$ be the set of soft-queries, and let $\DB$ be a database. We define $\SQL[\DB] \dfn (\DBeval{\varphi_1}{\DB}, \ldots, \DBeval{\varphi_n}{\DB})$.
From the order of the semiring $K$, we obtain the canonical partial order $\leq^n_K$ for $K^n$, which gives rise to a partial order $\leq_{\SQL, \DB}$ between databases by setting $\db'\leq_{\SQL, \DB}\db''$ if $\SQL[\db] \leq^n_K \SQL[\db'] \leq^n_K \SQL[\db'']$.

Alternatively, let $\Delta\colon K^n\times K^n \rightarrow K^n$ be a function that intuitively computes a distance between tuples of semiring values, and let $\agg$ be an $n$-ary aggregate function taking values in $K$.
We define a distance between instances in terms of $\SQL$ by setting
\(
d_{\SQL}(\DB, \DB') \dfn \agg\!\big(\Delta(\SQL[\DB], \SQL[\DB'])\big).
\)\bigskip

\paragraph{Soft-Constraints ($\SICs$).}
Modelled using sentences, these are used to define an inconsistency measure $\IM$ that allows us to obtain degrees of tolerance for the repair.
Let $\fsc \colon \dbs \times \Lang \to K$ be a function, and let $\agg$ be an aggregate function taking values in $K$.
The value $\fsc(\DB,\phi)$ indicates how inconsistent $\DB$ is according to $\phi$ (e.g., how far the property defined by $\phi$ is of being true in $\DB$) and $\agg$ then aggregates the levels of inconsistency.
The inconsistency measure is defined as $\IM(\DB, \SICs) \dfn \agg\!\multiset{ \fsc(\DB, \phi) \mid \phi \in  \SICs}$.

Next we exemplify the use of soft constraints with a $\sigma$ that is the aggregate sum defined by the $+$ of the semiring of natural numbers and with two examples of the function $\fsc$.
\begin{example}\label{ex:soft-c}
Consider $\DB$ in Table~\ref{tab:products} and the soft-constraint that warehouses $A$ and $B$ may contain at most one type of product. This is expressed as $\SICs\dfn \{ \phi_A, \phi_B\}$, where $\phi_c = \forall x \forall y \forall x' \forall y' \,  (\STOCK(x,y,c) \land \STOCK(x',y',c) \to y=y')$, for $c\in\{\mbox{\normalfont A, B}\}$.
In the first example below, an inconsistency measure is defined using the Boolean truth value of a negated formula $\neg \phi$ as a measure of the inconsistency of $\db$:
    \[
    \IM(\DB, \SICs) \dfn \sum\!\multiset{\BevalX{\neg \phi}{\db} \,\middle|\, \phi\in \SICs }= 1+0= 1.
    \]
    In the second, we directly use the semiring value $\evalX{\neg \phi}{\db}$:
    \begin{equation*}
    \IM(\DB_1, \SICs) \dfn \sum\!\multiset{\evalX{\exists x \exists y \exists x' \exists y' \,  (\STOCK(x,y,c) \land \STOCK(x',y',c) \land y\neq y')}{\db} \mid c\in\{\mbox{\normalfont A, B}\}}
    = 48+0= 48.
    \end{equation*}
\end{example}

We are now ready to formally define our repair framework that incorporates the concepts discussed.
\begin{definition}[\repairframeworkname]
    Fix a database schema $\schema$ and a logic $\LL$ over the schema.
    Fix an inconsistency measure $\IM$ defined in terms of $\LL$-sentences as described above.
    A \emph{\repairframeworkname} $\mathcal{R}=(\HICs, \SICs, \HQL, \SQL, \IM, m_{\SQL})$ is a tuple, where $\HICs$ and $\SICs$ are finite sets of $\LL$-sentences representing \emph{hard} and \emph{soft-constraints}, $\HQL=(\HQL^+, \HQL^-)$ and $\SQL$ are finite sets of $\LL$-formulae representing \emph{hard} and \emph{soft-queries}, and $m_{\SQL} \in \{ d_{\SQL}, \leq_{\SQL} \}$ is a method to compare databases as described above.
    We require $\HICs$ to be consistent.
\end{definition}
Given a database $\db$ and a \repairframeworkname, we define the following relativised sets of $\LL$-sentences:
$\mathcal{Q}_\star^\db \dfn \{\varphi(\vec{a}/\vec{x}) \mid \varphi(\vec{x})\in \mathcal{Q}_\star, \vec{a} \subseteq \adom(\db) \text{ of suitable type}\}$, for $\star\in\{h,s\}$.
This shift from sets of formulae to sets of sentences is not crucial but it does make the presentation of the following definition slightly lighter.
Note that if $\vec{b}$ contains an element that is not in the active domain of $\db'$ then the interpretations of atoms $R(\vec{b})$ (and their negations) are computed as if the elements belonged to the active domain of $\db'$.
That is,
$\DBeval{R(\vec{b})}{\db'} \hspace{-1mm}= \BevalX{R(\vec{b})}{\db'} \hspace{-1mm}= 0, \,
\DBeval{\neg R(\vec{b})}{\db'} \hspace{-1mm}= \BevalX{\neg R(\vec{b})}{\db'} \hspace{-1mm}= 1.$

In the following definition (in item \ref{itm:def_repair_0}.), we choose to limit the active domains of repairs.
With this restriction, it is possible to use more expressive logics in the different parts of the framework without increasing the computational complexity too much; see Section \ref{sec:upper_bounds} for our upper bounds.
Confer \cite[Thm~7.2]{DBLP:journals/mst/CateFK15}, where a very simple instance of CQA is shown undecidable.
Note that any inconsistent database instance can be provided with sets of fresh data values, which can then be used as fresh data values for the repairs without affecting our complexity results.

\begin{definition}[\repairname]
\label{def:repair}
Given a \repairframeworkname $\mathcal{R}=(\HICs, \SICs, \HQL, \SQL,\IM, m_{\SQL})$, a $K$-database $\DB$, and a threshold $\varepsilon \geq 0$, we say that $\DB'$ is an \emph{$\varepsilon$-\repairname of $\DB$} if the following six items are fulfilled:
\begin{multicols}{2}
\begin{enumerate}[(1)]
    \item $\adom(\db') \subseteq \adom(\db)$, \label{itm:def_repair_0}
    \item $\DB' \models \HICs$, \label{itm:def_repair_1}
    \item  $\evalX{\phi}{\db} \leq  \evalX{\phi}{\db'}$, 
    for all $\phi \in \HQL^+$,\label{itm:def_repair_2}
    \item  $\evalX{\psi}{\db} \geq  \evalX{\psi}{\db'}$,
    for all $\psi \in \HQL^-$,\label{itm:def_repair_2'}
    \item $\IM(\DB', \SICs) \leq \varepsilon$, \label{itm:def_repair_3}
    \item $\DB'$ is minimal with respect to $\leq_{\SQL, \DB}$, if $m_{\SQL}=\;\leq_{\SQL}$, and $d_{\SQL}(\DB, \DB')$ is minimised, if $m_{\SQL}=d_{\SQL}$. \label{itm:def_repair_4} 
\end{enumerate}
\end{multicols}
We say that $\DB'$ is an \emph{annotation unaware repair} if instead of \eqref{itm:def_repair_2} and \eqref{itm:def_repair_2'}, we require that $\BevalX{\phi}{\db} \leq  \BevalX{\phi}{\db'}$ for every $\phi \in \HQL^+$ and $\BevalX{\psi}{\db} \geq  \BevalX{\psi}{\db'}$ for every $\psi \in \HQL^-$.
It is easy to check that, $\BevalX{\phi}{\db} \leq  \BevalX{\phi}{\db'}$  if and only if $\BevalX{\neg \phi}{\db'} \leq \BevalX{\neg \phi}{\db}$. Hence, in the annotation unaware case, we may omit \eqref{itm:def_repair_2'} and write $\HQL$ for $\HQL^+$.
We drop $\varepsilon$ from $\varepsilon$-\repairname, if $\varepsilon=0$ or $\SICs=\emptyset$.
\end{definition}
Note that, if $\varepsilon>0$, the above notions are meaningful, even if $\SICs$ is inconsistent. 
The framework facilitates the creation of diverse repair notions.
For instance, by using \eqref{itm:def_repair_2} and \eqref{itm:def_repair_2'} it is straightforward to specify a repair notion which is a subset repair with respect to some relation $R$ (put $R(\vec{x})$ in $\HQL^-$), a superset repair with respect to some other relation $S$ (put $S(\vec{x})$ in $\HQL^+$), and where the interpretation of a third relation $T$ must remain unchanged (put both $T(\vec{x})$ in $\HQL^+$ and $T(\vec{x})$ in $\HQL^-$).
Moreover, putting $\neg R(\vec{x})$ in $\HQL^+$ and $\neg S(\vec{x})$ in $\HQL^-$ leads to a repair notion that allows annotations to be changed freely as long as, with respect to supports of relations, the repair notion is a subset repair with respect to $R$ and a superset repair with respect $S$.
The notions of minimality facilitated by \eqref{itm:def_repair_4} are also diverse.
The repair notions obtained by using $\leq_{\SQL}$ resemble standard set-based repairs, while notions given by $d_{\SQL}$ are similar to cardinality based repairs. This is due to $d_{\SQL}$ being in a sense $1$-dimensional, as it aggregates distances between interpretations of formulae in $\SQL$ into a single semiring value, which is then mimimised.

If $\SICs = \emptyset$ or $\epsilon=0$, Definition \ref{def:repair} yields a classical definition of a repair, where the desired minimality and repair criteria is defined through $\HQL$ and $\SQL$.
If instead $0 < \epsilon \leq \IM(\SICs,\DB)$, it resembles the definition of inconsistency-tolerant repair given in~\cite{Decker17}.
The following example shows how the standard superset and subset repairs, and their cardinality based variants, are implemented in our framework.

\begin{example}\label{ex:repairnotion}
    Consider the database example in Tab.~\ref{tab:products} restricting attention to the $\STOCK$ table, and notice that the hard constraint $\HICs\dfn \{``\ID \text{ is a key''}\}$ is violated. 
    Setting $\HQL^-\coloneqq\{\STOCK(x,y,z)\}$ as negative hard queries, yields that no tuples can be added to \STOCK nor any annotations can be increased in the repairs. 
    Setting $\SQL\coloneqq \{\STOCK(x,y,z)\}$ as soft queries and using the partial order $\leq_{\SQL, \DB}$ as the minimality notion, together with the hard queries, yields the standard subset repair notion (in bag semantics). 
    In this case, the database in Tab.~\ref{tab:products} has two repairs, obtained by removing one of the records of $\STOCK$ with $\ID$ $112$.
    
    Considering the same soft queries $\SQL\coloneqq \{\STOCK(x,y,z)\}$, we can instead define a distance between instances using the modulus $\lvert a-b \rvert$ and the aggregate sum of the natural numbers as 
    \(
    d_{\SQL}(\db, \db') \dfn \sum_{\phi \in \SQL} \lvert \DBeval{\phi}{\db} - \DBeval{\phi}{\db'} \rvert.
    \)
    In this case, the database in Tab.~\ref{tab:products} has only one repair.
    The instance that complies with the key constraint and the hard queries, and minimises this distance with respect to $\db$ is the one that keeps tuples $\STOCK(112, \mbox{ \normalfont cabbage, A})$ and $\STOCK(113, \mbox{ \normalfont carrot, B})$. %

    Considering $\HQL^+\coloneqq\{\exists x \exists y \exists z \, \STOCK(x,y,z)\}$ as a set of hard queries for an annotation-aware repair, yields that repairs maintain at least the same quantity of product units as $\db$, or more. Note that this restriction allows annotations to change and does not prevent tuples from being deleted or added.
    \end{example}
    
    \begin{example}
    Let $\DB$ be the two-table database of Tab.~\ref{tab:products} and $\mathcal{R}=(\HICs, \SICs, \HQL, \SQL, \IM, m_{\SQL})$ be a repair framework, where the hard-constraints are $\HICs\dfn \{``\Warehouse \text{ is a foreign key''}\}$, the soft-constraints are $\SICs \dfn \{\forall x \, \neg \STOCK(x, \mbox{ \normalfont cabbage, A}), \forall x \, \neg \STOCK(x, \mbox{ \normalfont potato, B})\}$, the positive and negative hard-queries are
    $\HQL^+\coloneqq\{\neg \STOCK(x,y,z), \BUILDINGS(u,v)\}$ and $\HQL^-\coloneqq\{\STOCK(x,y,z)\}$, and the soft-queries are $\SQL \dfn \{\STOCK(x,y,z), \BUILDINGS(u,v)\}$. Let $\IM$ be the annotation-aware inconsistency measure from Example  \ref{ex:soft-c}, $m_{\SQL}$ be $\leq_{\SQL}$, and $\epsilon\dfn 5$.
    
    Now, the hard-queries imply that any repair $\DB'$ of $\DB$ must be such that no deletions to $\BUILDINGS$ have been done and the support of the table \STOCK has remained unchanged. 
    The soft-queries together with $\leq_{\SQL}$ imply that the repair should coordinate-wise minimise the changes made to the tables. 
    The soft-constraints imply that not too many cabbages (potatos, resp.) are stored in warehouse {\normalfont A} ({\normalfont B}, resp.). 
    Hence, the only $\epsilon\text-\mathcal{R}$-repairs of $\DB$ is are databases that inserts a tuple $(\mbox{\normalfont B}, x)$ with annotation $1$ to $\BUILDINGS$ for some address $x$ and change the annotation of the second record of \STOCK in $\DB$ from $6$ to~$5$.
\end{example}

Some properties are expected when dealing with repairs, for example, a consistent $\DB$ should not need to be repaired.
Indeed, if $\DB$ satisfies $\HICs$ and $\SICs$, then in particular $\IM(\SICs, \DB)=0$ and it follows from the $\leq_{d_{\SQL},\DB}$-minimality of $\DB$ (minimality of $d_{\SQL}(\DB, \DB)$, resp.) that $\DB$ is a repair.
It is often desirable that any $\DB$ can always be repaired.
However, our framework can express both ICs and critical properties of databases that should be kept unchanged (expressed using $\HQL$).
Assuming that the set of ICs is consistent, there will always be some $\DB'$ which satisfies the ICs and thus $\DB' \models \HICs$ and $\IM(\SICs, \DB')=0$.
However, this is not sufficient to ensure that $Ans(\DB,\HQL^+) \subseteq Ans(\DB',\HQL^+)$ and $Ans(\DB,\HQL^-) \supseteq Ans(\DB',\HQL^-)$. Furthermore, if we want to find an inconsistency-tolerant repair and only assume $\HICs$ to be consistent, this is not sufficient to ensure that $\IM(\SICs, \DB') \leq \epsilon$ for an instance $\DB'$ that satisfies $\HICs$.
Hence, deciding the \emph{existence of a repair} turns out to be a meaningful problem in our framework.

We conclude by giving some worked out examples of how our repair framework can be used to obtain new repair notions, and to exemplify the flexibility of our framework.

\begin{example}\label{ex:1:rel:new_notions}
Consider a database schema with two relations indicating a teaching allocation of a department; $T(x,y)$ and $C(z)$ indicate that the lecturer $x$ is assigned to the course $y$, and that $z$ is a course.
Consider an annotation unaware \repairframeworkname $(\HICs, \SICs,  \HQL, \SQL, \IM, d_{\SQL})$, such that
\begin{align*}
    \HICs &\coloneqq \{ \forall y \left( C(y) \rightarrow \exists x T(x,y) \right), \neg T(t_4,c), \neg C(a) \}, \quad
    \SICs \coloneqq \{\}, \\
    \HQL^+ &\coloneqq \{ T(t_1,c), T(t_2,d), C(b), C(c), C(d) \}, \quad \HQL^-\coloneqq\emptyset,\\
    \SQL &\coloneqq \{ \phi_i = \forall x \exists^{\leq i} y\,\big( C(y) \land T(x,y) \big) \mid 1 \leq i \leq 10 \},
\end{align*}
where $t_1,\dots,t_4,a,b,c$, and $d$ are constants. 
The sentence $\phi_i$ (written using counting quantifiers as the usual shorthand) expresses that every teacher is assigned to at most $i$ courses.
From $\SQL$, we define a distance between instances to describe a minimality concept that is neither set-based nor cardinality-based (below $\Delta(x_1,x_2) \dfn |x_1 - x_2|$):
\[
d_{\SQL}(\DB, \DB') \dfn \sum_{\phi_i\in \SQL} \Delta \!\left(\BevalX{\phi_i}{\DB}, \BevalX{\phi_i}{\DB'}\right).
\]
The distance prioritises instances that have a similar maximum allocation per teacher.
Consider an instance
$\DB \coloneqq \{  T(t_1,c), T(t_2,d), T(t_1,b), T(t_4,c), %
 C(a), C(b),$ $ C(c), C(d), C(e) \}.$
Here, we have $\BevalX{\phi_1}{\DB} = 0$ and $\BevalX{\phi_i}{\DB} = 1$ for every $i\geq 2$.
Now,
$\DB' \dfn \{ T(t_1,c), T(t_2,d), T(t_1,b),$ $ T(t_4,e), %
C(b), C(c), C(d), C(e) \}$
is a repair satisfying that every course in $C(y)$ has at least one teacher assigned, the facts $T(t_4, c)$  and $C(a)$ are no longer in the database, and prioritises the criteria that maximum allocation per teacher is as similar as possible to $\DB$.
The framework allows us to express prioritised repairs in terms of formulae, which differs from the approach of using predefined priority criteria studied in the repair literature.
\end{example}

Next, we give an example of a \repairframeworkname for a graph database.
For simplicity,  we consider simple directed graphs.
\begin{example}
Consider graph databases that are relational structures of vocabulary $\{E, P_1, P_2\}$, where $E$ is binary, and $P_1$ and $P_2$ are unary. For simplicity, we restrict to structures with Boolean annotations.
Let $(\HICs, \SICs, \HQL, \SQL, \IM, \leq_{d_{\SQL}})$ be a \repairframeworkname over the semiring of natural numbers,  where $\HICs$ specifies some ICs, $\SICs \dfn \emptyset$, and
\(
\HQL^+ \dfn \{ \psi_i(\vec{x}_i) \mid 1 \leq i \leq 10\} \cup \{ P_1(x), \neg P_2(x) \},
\)
where the formula $\psi_i(\vec{x}_i) \dfn x_0=x_i \land \bigwedge_{0 \leq j < k \leq i} x_j \neq x_k  \land \bigwedge_{0 \leq j < i} E(x_j, x_{j+1})$ expresses that $\vec{x}_i = (x_0,\dots,x_i)$ induces a cycle of length $i$, and $\HQL^-\coloneqq\emptyset$. 
Finally, we set $\SQL \coloneqq \{\exists x (x=x) \}$.
The distance $d_{\SQL}$ is the annotation aware distance given in Example \ref{ex:repairnotion}.
Given a graph database $G$, the space of repairs of $G$ described by $\HICs$ and $\HQL$ are those labelled graphs $G'$ that satisfy the ICs in $\HICs$, include all vertices labelled with $P_1$ with labels intact, include every short cycle of $G$ with possibly different labels, and does not label new vertices with $P_2$.
The graphs with minimal $d_{\SQL}(G,G')$ are those $G'$ which are closest to $G$ in cardinality.
\end{example}

\section{Conclusions and Future work}
Our main contribution is the introduction of a novel abstract framework for defining database repairs.
We showcase the flexibility of our framework by giving examples of how the main repair notions from the literature can be expressed in our setting. In addition, we introduce novel repair notions that exemplify further the potential of our framework.%

As a technical contribution, we initiate the complexity-theoretic study of our framework. 
Completing this systematic classification remains an avenue for future work. 
In particular, exploring the possibilities for the soft constraints would require a deeper investigation into inconsistency measures, identifying suitable properties they should satisfy within this framework and considering potential alternative postulates. 
We examine the complexity of consistent query answering and existence of a repair in the context of Boolean annotations.
Unlike in prior studies, determining whether a repair exists is a meaningful problem in our framework, as it can express both integrity constraints that potentially need to be fixed as well as critical properties of databases that should be preserved.
Our complexity results are obtained by reducing known complete problems to questions related to repairs in our framework, and by directly relating problems in our framework to problems concerning logics (see Table~\ref{tab:cresults-for-cqa} on page \pageref{tab:cresults-for-cqa} in the appendix for an overview of our complexity results).
Since our framework is logic-based, the non-emptiness of the consistent answers and the existence of a repair can be formulated as model checking problems in logic.

We conclude with future directions and open questions:
\begin{itemize}\itemsep-1mm
    \item Our complexity results are mainly negative, as we show intractable cases. Can we pinpoint $\repairnames$ where the related complexities are below $\NP$? Does parameterised complexity~\cite{DBLP:series/mcs/DowneyF99} help?
    \item What characterisations can we obtain for enumeration complexity of repairs, or repair checking?
    \item Does there exist, for every level of the polynomial hierarchy, a fixed repair framework $\rep$ such that the existence of repair is complete for that level of the hierarchy?
\end{itemize}
Our complexity considerations focus on relational databases and set semantics. A natural next step is to consider bag semantics and $K$-databases in general. Here approaches using BSS-machines and variants of arithmetic circuits could be fruitful.
Finally note that since our framework is logic-based, it would not be hard to extend it for repair notions in the setting of data integration, where data can be stored and queried under different data models. Using logical interpretations, the repair notions could be defined on the target data model, while the repairs could be executed on the source data model.

\bibliographystyle{plain}
\bibliography{refs}

\newpage
\appendix

\section{Annotation unaware case}\label{sec:results}

First we recall some important definitions.
We consider repair frameworks $(\HICs, \SICs, \HQL, \SQL,\IM, d_{\SQL})$, where $\HICs$ and $\SICs$ are finite sets of first-order sentences of which $\HICs$ is assumed to be consistent, and $\HQL=(\HQL^+,\HQL^-)$ and $\SQL$ are finite sets of first-order formulae. In this section, we fix the following canonical inconsistency measure and distance between databases and simply write $(\HICs, \SICs, \HQL, \SQL)$ to denote repair frameworks: the inconsistency measure $\IM$ is defined as
    \(
    \IM(\DB, \SICs) \dfn \sum\multiset{\,\BevalX{\neg \phi}{\db}   \;\middle|\; \phi\in \SICs \,},
    \)
 and the distance between databases $d_{\SQL}$ is defined as
     \(
    d_{\SQL}(\DB, \DB') \dfn \sum\multiset{ \, \big\lvert \BevalX{\phi}{\db} - \BevalX{\phi}{\db'} \big\rvert \; \;\middle|\; \phi \in \SQL^\db \, }.
    \)
In this section, we focus on annotation unaware repairs as defined in Definition \ref{def:repair} using the semiring of natural numbers.
Hence, we essentially consider set-based databases (that is $\mathbb{B}$-databases) that are encoded as $\mathbb{N}$-databases to obtain richer repair frameworks.
Since the inconsistency measure and distance are defined to be annotation unaware, and we restrict our attention to repairs of set-based databases encoded as bag-databases, we may stipulate that all repairs obtained will have Boolean annotations. Technically the sets of repairs may also contain non-Boolean annotations, but the set will be invariant under collapsing the annotations to Booleans.

\subsection{Classical results on complexity of repair}\label{sec:classicalrepairs}
The complexity of database repairs is often characterised for ICs specified in fragments of tuple-generating dependencies (tgds).
Queries used for consistent query answering are often conjunctive queries.
Next, we define the fragments of first-order logic of interest.

We write $\atoms$ for the set of \emph{atomic formulae} $R(\vec{x})$, where $R$ is a relation and $\vec x$ is a variable tuple.
The set $\lits$ of \emph{literals} contain atomic formulae and their negations.
A \emph{tgd}
is a first-order sentence of the form
\(
\forall \vec x\big(\varphi(\vec x)\to \exists\vec y\psi(\vec x,\vec y)\big),
\)
where $\varphi,\psi$ are conjunctions of atomic formulae, $\vec x=(x_1,\dots,x_n)$ and $\vec y=(y_1,\dots,y_m)$ are variable tuples, and every universally quantified variable $x_i$ occurs in $\varphi$.
A \emph{local-as-view ($\LAV$)} tgd is a tgd in which $\varphi$ is a single atomic formula.
    A \emph{global-as-view ($\GAV$)} tgd is a tgd
    \(
    \forall \vec x \big(\varphi(\vec x)\to \psi(\vec{x}')\big)
    \)
    in which $\psi$ is a single atomic formula such that the variables in $\bar{x}'$ are among the variables of $\bar x$.
A conjunctive query is a first-order formula of the form
\(
\exists x_1\dots \exists x_n (\phi_1\land\dots \land \phi_m),
\)
where each $\phi_i$ is an atomic formula.
We write $\COQ$ for the set of conjunctive queries.

We are interested in the following computational problems.
    Let $\mathcal{R}$ be a \repairframeworkname and $q$ a $k$-ary query.

\noindent
    \begin{center}
        \begin{tabular}{@{\ }r@{\ }p{6.5cm}@{\ }}
            \toprule
            \textbf{Problem:} & Consistent query answering ($\RCQA(q)$) \\
            \midrule
            \textbf{Instance:} & Database $\DB$, a tuple $\vec{t} \in \adom(\DB)^*$\\
            \textbf{Question:} & Is $\vec{t}\in\!\ans(\DB'\!,q)$ for every $\mathcal{R}$-repair of $\DB$? \\
            \bottomrule
        \end{tabular}
        \hfill
        \begin{tabular}{@{\ }r@{\ }p{5cm}@{\ }}
            \toprule
            \textbf{Problem:} & Existence of repair ($\exists\repairname$)\\
            \midrule
            \textbf{Instance:} & Database $\DB$ \\
            \textbf{Question:} & Does $\DB$ have an $\mathcal{R}$-repair? \\
            \bottomrule
        \end{tabular}
    \end{center}
\vspace{2mm}

    If $\mathcal{R}$ is replaced with a classical repair notion in the definitions above, we obtain the usual decision problems.

Ten Cate~et~al.~\cite{tenCate:2012} present a thorough study on the data complexity of CQA and repair checking problems for set-based repairs.
As this section deals with so-called cardinality-based repairs, their results are not directly applicable.
We generalise cardinality-based variants of these repair notions.%

\subsection{Complexity of simple repair notions}\label{sec:complexity-simple-repair}
\begin{table}
    \centering
    \[
    \begin{array}{*{5}{c}}\toprule
        \multicolumn{2}{c}{\text{ICs}} & \multicolumn{2}{c}{\text{repair notions}} & \RCQA \\
        \cmidrule(r{.1em}l{.1em}){1-2} \cmidrule(r{.1em}l{.1em}){3-4}
        \HICs & \SICs & \HQL^+ / \HQL^- & \SQL & \text{Complexity}\\\midrule
        \LAV & \emptyset & \atoms & \atoms & \coNP\text{-hard}\hfill\quad \text{Proposition~\ref{prop:CQA(LAV,empty,lnotR,R)}} \\
        \GAV & \emptyset & \atoms & \atoms & \coNP\text{-hard}\hfill\quad \text{\cite[Theorem~5.5]{DBLP:journals/mst/CateFK15}}\\
        \GAV & \emptyset & \COQ / \atoms & \emptyset & \coNP\text{-complete}\hfill\quad  \text{Theorem~\ref{thm:FOCQA}} \\
        \fo & \emptyset & \fo & \emptyset & \coNP\text{-complete} \hfill\quad \text{Theorem~\ref{thm:FOCQA}}\\
        \text{tgds} & \emptyset & \atoms & \atoms & \Theta^p_2\text{-complete} \hfill\quad \text{Theorem~\ref{thm:FOCQA2}}\\
        \fo & \emptyset & \fo & \fo & \Theta^p_2\text{-complete} \hfill\quad  \text{Theorem~\ref{thm:FOCQA2}}\\
        \bottomrule
    \end{array}
    \]
    \caption{Complexity results overview for consistent query answering. Hardness results hold already for conjunctive queries, while inclusions are proven for first-order queries.
    Upper/lower bounds transfer to respective classes, e.g., the membership result for $\fo$ applies also to subclasses like $\lits$ or $\atoms$. In the $\HQL^+/\HQL^-$-column, if we do not use a slash, then the specification refers to both $\HQL^+$ and $\HQL^-$.
    }
    \label{tab:cresults-for-cqa}
\end{table}

For simplicity, we focus on notions where $\SICs=\emptyset$. 
Table \ref{tab:cresults-for-cqa} summarises our results on the complexity of CQA.

\begin{remark}
    For $\mathcal{R} = (\GAV,\emptyset,\atoms,\atoms)$,
    the existence of a conjunctive query $q$ such that $\RCQA(q)$ is $\coNP$-hard follows from the proof of \cite[Thm.~5.5]{DBLP:journals/mst/CateFK15} in a straightforward manner.
\end{remark}
\begin{proof}
We adapt the proof of ten Cate et al.~\cite[Thm.~5.5]{DBLP:journals/mst/CateFK15} to our setting, which shows a reduction from the complement of Positive 1-in-3-SAT to CQA.
The $\NP$-complete problem \textsc{Positive 1-in-3-SAT} is defined as follows: Given a Boolean formula $\varphi$ in conjunctive normal form, where each clause is of the form $(x_1 \lor x_2 \lor x_3)$ and contains only positive literals, is there a truth assignment such that exactly one variable is true in each clause?
We define an instance of the repair framework and a query $q \in \COQ$ such that $\varphi \notin \textsc{Positive 1-in-3-SAT}$ if and only if $\top \in \CQA(q, \DB, \mathcal{R})$.

Let the components of the repair framework be as follows:
\begin{align*}
    \HICs &\coloneqq \big\{\forall x \forall u, \forall u' \big(P(x, u) \land P(x, u') \to E(u, u')\big) \big\}, \\
    \HQL^+ &\coloneqq \{E(x, y), R(x, y, z), S(u, v)\}\\
    \HQL^- &\coloneqq \{E(x, y), R(x, y, z), S(u ,v), P(x, u)\}\\
    \SQL &\coloneqq \{P(x, u)\}.
\end{align*}
For the query we have
\begin{equation*}
    q \coloneqq \exists x_1\, x_2\, x_3\, u_1\, u_2\, u_3 \; \big(R(x_1, x_2, x_3) \land P(x_1, u_1) \land P(x_2, u_2) \land P(x_3, u_3) \land S(u_1, u_2, u_3) \big).
\end{equation*}
We define the instance $\DB$ w.r.t $\varphi$ as:
\begin{align*}
    R^{\DB} &\coloneqq \{(x_{i1}, x_{i2}, x_{i3}) \mid 1 \leq i \leq n\}\\
    P^{\DB} &\coloneqq \{(x_{ij}, 0), (x_{ij}, 0) \mid 1 \leq i \leq n, \mid 1 \leq j \leq 3\}\\
    E^{\DB} &\coloneqq \{(1, 1), (0, 0)\},\\
    S^{\DB} &\coloneqq \{0, 1\}^3 \setminus \{(1,0,0),(0,1,0),(0,0,1)\}.
\end{align*}
We omit the proof of correctness here as it is analogous to the proof of ten Cate et al. \cite[Thm.~5.5]{DBLP:journals/mst/CateFK15}.
\end{proof}

\begin{proposition}
    Let $\mathcal{R} = (\LAV, \emptyset, \atoms, \atoms)$. There is a conjunctive query $q$ such that $\RCQA(q)$ is $\coNP$-hard. \label{prop:CQA(LAV,empty,lnotR,R)}
\end{proposition}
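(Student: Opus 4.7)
The plan is to reduce from the complement of Positive 1-in-3-SAT to $\RCQA(q)$, closely following the structure of the $\GAV$ proof above. The main obstruction is that the central integrity constraint used there, $\forall x\,u\,u'\,(P(x,u)\land P(x,u')\to E(u,u'))$, has a conjunctive LHS and is therefore not a $\LAV$ tgd; the task is to emulate its effect — functionality of $P$ — using only single-atom-LHS tgds together with the $\HQL^+/\HQL^-$ and $\SQL$ mechanisms of our framework.

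I would preserve the overall skeleton of the $\GAV$ reduction: given a positive $3$-CNF $\varphi$, encode the clauses via a ternary $R$, populate $P$ with both $P(x_{ij},0)$ and $P(x_{ij},1)$ for each literal position, keep $E = \{(0,0),(1,1)\}$ and $S = \{0,1\}^3 \setminus \{(1,0,0),(0,1,0),(0,0,1)\}$, and reuse the same conjunctive query $q$. I would place $R,E,S$ in both $\HQL^+$ and $\HQL^-$ (freezing them), put $P(x,u)$ in $\HQL^-$ only (so $P$-facts may be deleted but not added), and take $\SQL = \{P(x,u)\}$ so that $d_{\SQL}$ counts the number of deleted $P$-facts. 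To replace the functionality tgd, I would introduce an auxiliary relation $W$, a $\LAV$ tgd $P(x,u)\to \exists v\,W(x,u,v)$, and include $W(x,u,v)$ in $\HQL^-$. The facts of $W^{\DB}$ would be pre-loaded so that keeping both $P(x_{ij},0)$ and $P(x_{ij},1)$ witnesses the LAV tgd only via an "extra" element of $W^{\DB}$ that — through one or two further $\LAV$ tgds routed via $W$ into a bookkeeping relation included in $\SQL$ — makes such a repair strictly costlier under $d_{\SQL}$ than any repair retaining exactly one $P$-value per position. After this calibration, the cardinality-minimal repairs are in bijection with truth assignments, and the correctness argument — $\top \in \CQA(q,\DB)$ iff every assignment fails to be a $1$-in-$3$ model of $\varphi$ — proceeds verbatim as in the $\GAV$ proof.

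The main obstacle is the design of this $W$-gadget. Since $\LAV$ tgds alone cannot enforce "at most one witness," uniqueness of the $P$-choice has to be engineered entirely through the interplay of the $\HQL^-$ freeze on the auxiliary relations, the exact contents of the preloaded $W^{\DB}$, and the cardinality-based $d_{\SQL}$; verifying that no degenerate repair — e.g.\ one deleting both $P(x_{ij},0)$ and $P(x_{ij},1)$ and thereby vacuously satisfying the $\LAV$ tgd — beats a full truth-assignment repair is the delicate step. A cleaner alternative route is to observe that the framework instance with $\HQL^+=\emptyset$, $\HQL^-=\atoms$, and $\SQL=\atoms$ exactly captures cardinality-based subset-repair CQA; the classical $\coNP$-hardness of CQA for $\LAV$ tgds and conjunctive queries under this semantics then transfers directly, provided the known reduction is verified to live within the restrictions of our framework.
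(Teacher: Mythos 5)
There is a genuine gap: the heart of your plan is the $W$-gadget that is supposed to recover functionality of $P$ from $\LAV$ tgds plus the minimality machinery, and you never actually construct it. Worse, the sketch as given cannot work. A $\LAV$ tgd $P(x,u)\to\exists v\,W(x,u,v)$ is checked one $P$-fact at a time; if $W^{\DB}$ is preloaded and sits only in $\HQL^-$, then every retained $P$-fact can be witnessed by simply \emph{keeping} its preloaded $W$-witness, at zero cost to $d_{\SQL}$, and this is true whether one or both of $P(x_{ij},0),P(x_{ij},1)$ are retained. To charge a repair extra precisely when \emph{both} values are retained, some constraint must see the two $P$-atoms (or their two $W$-witnesses) in the same body --- which is exactly the conjunctive left-hand side that $\LAV$ forbids. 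So the "calibration" you defer is not a technicality; it is the whole difficulty, and the proposed routing through $W$ does not resolve it. Your fallback --- citing "classical" $\coNP$-hardness of CQA for $\LAV$ tgds --- is also not a proof: the known lower bounds are for set-inclusion-minimal (subset/symmetric-difference) repairs, whereas $\SQL=\atoms$ with $d_{\SQL}$ yields \emph{cardinality}-minimal repairs, a strictly smaller set of repairs, so hardness does not transfer automatically and would need to be re-verified.

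The paper avoids the functionality issue entirely by changing the source problem. It reduces from the complement of \textsc{3-Colourability} with the single $\LAV$ tgd $\forall x y\,(E(x,y)\to\exists j k\,(C(x,j)\land C(y,k)\land P(j,k)))$, freezes $E$ and $P$ via $\HQL^+=\HQL^-$, starts with $C^{\DB}=\emptyset$, sets $\SQL=\{C(x,j)\}$, and uses the query $q=\exists x\,j\,k\,(C(x,j)\land C(x,k)\land P(j,k))$ to detect a doubly-coloured vertex. Here the tgd only forces colour facts to be \emph{added}, minimality keeps the added colours to one per vertex exactly when a proper $3$-colouring exists, and non-colourability forces every repair to double-colour some vertex. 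If you want to salvage your approach, switching to a reduction where the $\LAV$ constraint generates facts rather than restricts them is the move to make.
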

\begin{proof}
    We construct a reduction from the complement of \textsc{3-Colourability} to the problem of finding consistent answers.
    The components of the repair framework are as follows:
    \begin{align*}
        \HICs \coloneqq& \left\{\forall x y( E(x,y) \to \exists jk (C(x,j) \land C(y,k) \land P(j, k))) \right\}, \\
        \HQL^+ =&\ \HQL^- \coloneqq \{E(x,y), P(j,k)\} \quad
        \SQL \coloneqq \{C(x, j)\}.
    \end{align*}
    Let $G=(V,E)$ be a graph.
    The instance $\DB$ is defined as:
    \begin{align*}
        E^{\DB} &\coloneqq \{\,(x, y) \mid (x, y) \in E(G)\,\}, &
        C^{\DB} &\coloneqq \{\}, &
        P^{\DB} &\coloneqq \{1,2,3\}^2 \setminus \{(1,1), (2,2), (3,3)\}.
    \end{align*}
    Further define $q \coloneqq \exists x \, j \, k\,  \big( C(x,j) \land C(x,k) \land P(j,k) \big)$. %

    The intuition of the relational symbols is as follows: $E$ encodes the edges of the graph $G$, $C$ assigns to each vertex a colour and $P$ encodes inequality between the three possible colours.
    Now, we claim that the following is true for all $\repairnames$ $\DB'$ of $\DB$:
    $\ans(\DB', q)$ is nonempty if and only if $G \not\in \textsc{3-Colourability}$.

    If $G \not\in \textsc{3-Colourability}$, then there is no valid colouring.
    It follows that in all repairs there must be a vertex $x$ which is assigned two colours to satisfy the hard constraint, thereby satisfying the query $q$.
    If $G \in \textsc{3-Colourability}$, then there exists a valid colouring $f$ for $G$.
    Let $\DB_{\!f}$ be the following instance:
    \begin{align*}
        E^{\DB_{\!f}} &\coloneqq E^{\DB},&
        C^{\DB_{\!f}} &\coloneqq \{(x,f(x)) \mid x \in V(G)\},&
        P^{\DB_{\!f}} &\coloneqq P^{\DB}.
    \end{align*}
    It is easy to see, that $\DB_{\!f}$ is a repair of $\DB$ and $q(\DB_{\!f})$ is false, so the consistent answers to $q$ cannot be nonempty.
    Since $\textsc{3-Colourability}$ is $\NP$-complete, the claim follows.
\end{proof}

\begin{restatable}{proposition}{existrepairGAVemptyCOQempty}\label{prop:existrepair(GAV,empty,COQ,empty)}
    Let $\mathcal{R} = (\GAV, \emptyset, \COQ \cup \atoms, \emptyset)$. Then $\exists\repairname$ is $\NP$-complete.
\end{restatable}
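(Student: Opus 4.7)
The plan is to prove $\NP$-completeness by establishing matching upper and lower bounds. Since $\SQL = \emptyset$ and $\SICs = \emptyset$, conditions~(5) and~(6) of Definition~\ref{def:repair} are vacuous: a repair is any $\DB'$ with $\adom(\DB') \subseteq \adom(\DB)$ that satisfies $\HICs$ and the hard-query preservation conditions. For the upper bound I will guess such a $\DB'$, which has polynomial size by the active-domain restriction, and verify in polynomial time that (i)~$\DB'$ satisfies every GAV tgd in $\HICs$, and (ii)~every instantiation of each query in $\HQL^+$ (resp.\ $\HQL^-$) that holds in $\DB$ still holds (resp.\ still fails) in $\DB'$. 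Both tasks reduce to polynomial-time data-complexity CQ evaluation over the fixed framework, placing the problem in $\NP$.

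For $\NP$-hardness I will reduce from 3-Colourability. Given an input graph $G=(V,E)$, I use a schema with unary symbols $V,B$ and binary symbols $E,C$, and construct $\DB$ by setting $V^\DB = V(G)$, $E^\DB = E(G)$, $C^\DB = V(G)\times\{1,2,3\}$, and $B^\DB = \emptyset$. As the fixed repair framework I take the single GAV tgd $\forall xyc\,(E(x,y) \land C(x,c) \land C(y,c) \to B(x))$, the positive hard queries $\HQL^+ = \{E(x,y),\, V(x)\land\exists c\, C(x,c)\}$, and the negative hard queries $\HQL^- = \{C(x,y),\, B(x)\}$, all of which lie in $\COQ\cup\atoms$.

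The correctness claim is that $\DB$ has an $\mathcal{R}$-repair if and only if $G$ is 3-colourable. The hard queries force $E^{\DB'}\supseteq E(G)$, $C^{\DB'}\subseteq V(G)\times\{1,2,3\}$, $B^{\DB'}=\emptyset$, and that every vertex has at least one colour in $C^{\DB'}$; combined with the GAV tgd (and $B^{\DB'}=\emptyset$) this rules out monochromatic edges. Hence a repair exists iff one can assign each vertex a nonempty subset of $\{1,2,3\}$ with disjoint subsets along every edge of $G$, which is clearly equivalent to $G$ being 3-colourable (picking any representative from each nonempty set yields a proper 3-colouring, and a proper 3-colouring directly produces a valid $C^{\DB'}$).

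The main obstacle will be pinning down the hard queries so that they carve out exactly this colouring space. In particular, placing $C(x,y)$ in $\HQL^-$ prevents any colour tuple outside $V(G)\times\{1,2,3\}$ from appearing in the repair, since only such tuples occur in $C^\DB$, thereby excluding spurious ``colours'' such as vertex names being used as colours; and allowing vertices to keep multiple colours in the repair does not spoil the correspondence, as explained above. Together, the two bounds yield $\NP$-completeness.
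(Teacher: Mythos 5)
Your proof is correct and takes essentially the same route as the paper's: $\NP$-membership by guessing and checking a polynomial-size repair over the restricted active domain (the paper packages this as existential second-order model checking), and $\NP$-hardness by a reduction from 3-Colourability with the same encoding --- all vertex--colour pairs placed in $C^\DB$, a conjunctive query in $\HQL^+$ forcing every vertex to retain a colour, atoms in $\HQL^-$ confining $C$ and the auxiliary relations, and a GAV tgd excluding monochromatic edges. The only cosmetic difference is that you realise the denial constraint via an atom $B(x)$ that $\HQL^-$ keeps empty and allow vertices to keep several colours (correctly arguing this is harmless), whereas the paper uses an explicit colour-inequality relation and forces exactly one colour per vertex.
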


\begin{proof}

        We show $\NP$-hardness; inclusion to $\NP$ is due to Theorem~\ref{thm:existrepair(fo,empty,fo,any)}.
    We again present a reduction from \textsc{3-Colourability}.
    Let $\HICs \coloneqq \{\phi_1, \phi_2\}$ with
    \begin{align*}
        \phi_1 &= \forall x \, j \, k \, ( \big( C(x,j) \land C(x,k) \land D(j,k) \big) \rightarrow \false ), \\
        \phi_2 &= \forall x \, y \, j \, k \,( \big( E(x,y) \land C(x,k) \land C(y,j) \big) \rightarrow D(k,j) )
    \end{align*}
    and $\HQL^+ \coloneqq \{\exists j\, C(x,j), E(x,y), P(j,k)\}$, $\HQL^- \coloneqq \{E(x,y), P(j,k)\}$.

    Let $G=(V,E)$ be a graph.
    The instance $\DB$ is as follows:
    \begin{align*}
        E^{\DB} &\coloneqq \{(x, y) \mid (x, y) \in E(G)\}, &
        C^{\DB} &\coloneqq \{(x, j) \mid x \in V(G), j \in \{1,2,3\}\}, \\
        P^{\DB} &\coloneqq \{1,2,3\}^2 \setminus \{(1,1), (2,2), (3,3)\}.
    \end{align*}
    The intuition for the relational symbols is the same as in the proof of Proposition~\ref{prop:CQA(LAV,empty,lnotR,R)}.
    For the hard constraints and queries we have that $\phi_1$ ensures that each vertex has at most one colour, $\phi_2$ ensures that the vertices of an edge have different colours and $\exists j\, C(x,j)$ ensures that each vertex that has a colour in $\DB$ has at least one colour in its repair.

    We now show correctness.
    By construction, if $G$ has a 3-colouring via $f$, then $\DB_f$ is a repair of $\DB$.

    Now, if $\DB'$ is a repair of $\DB$, we know that $E^{\DB'} = E^{\DB}, D^{\DB'} = D^{\DB}$ and $C^{\DB'} \subseteq C^{\DB}$.
    It follows from $\exists j\, C(x,j)$ and $\phi_1$, that each vertex $x$ has exactly one colour.
    Let $f(x) = j$ with $(x, j) \in C^{\DB'}$.
    With $\phi_2$ we can conclude that $f$ is a valid colouring of $G$, because no colour has an edge to a vertex with the same colour.
\end{proof}

\subsection{Existence of a repair and second-order logic}\label{sec:upper_bounds}

We now show how the existence of a repair can be reduced to model checking of existential second-order logic, and obtain general upper bounds for existence of repair and CQA.

\begin{theorem}\label{thm:existrepair(fo,empty,fo,any)}
    Let $\mathcal{R} = (\HICs, \emptyset, \HQL, \SQL)$, where $\HICs, \HQL^+, \HQL^- \subseteq \fo$ and $\SQL$ is arbitrary. Then $\exists\repairname$ is in $\NP$. Moreover, the problem is $\NP$-complete for some $\HICs, \HQL^+, \HQL^- \subseteq \fo$.
\end{theorem}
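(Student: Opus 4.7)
The plan is to observe that when $\SICs=\emptyset$, the existence of a \repairname reduces to the existence of \emph{any} database $\DB'$ satisfying items~\eqref{itm:def_repair_0}--\eqref{itm:def_repair_2'} of Definition~\ref{def:repair}. Item~\eqref{itm:def_repair_3} is trivial, and since item~\eqref{itm:def_repair_0} confines $\adom(\db')$ inside $\adom(\db)$, there are only finitely many candidate databases $\DB'$ (at most $2^{\mathrm{poly}(|\db|)}$ once the fixed schema and fixed arities are taken into account). Any nonempty finite poset has minimal elements, so whenever the set of candidates satisfying items~\eqref{itm:def_repair_0}--\eqref{itm:def_repair_3} is nonempty we can pick a $\leq_{\SQL,\db}$- or $d_{\SQL}$-minimal one, which then automatically fulfils item~\eqref{itm:def_repair_4}. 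This is the main conceptual step, and it is what makes the arbitrariness of $\SQL$ irrelevant to the complexity.

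For the upper bound, the $\NP$ algorithm nondeterministically guesses a database $\DB'$ with $\adom(\db')\subseteq \adom(\db)$, which has polynomial bit-size because the schema is fixed. It then verifies that $\DB'\models \HICs$ and that the answer sets move in the required direction for each hard-query, i.e.\ $\ans(\DB,\phi)\subseteq \ans(\DB',\phi)$ for every $\phi\in\HQL^+$ and $\ans(\DB,\psi)\supseteq \ans(\DB',\psi)$ for every $\psi\in\HQL^-$. Since $\mathcal{R}$ is fixed (we are in data complexity), each of these checks reduces to polynomially many first-order model-checking instances on $\DB$ and $\DB'$ with a fixed formula, and hence runs in polynomial time.

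For $\NP$-hardness, I would invoke Proposition~\ref{prop:existrepair(GAV,empty,COQ,empty)}, which already provides a repair framework $(\GAV,\emptyset,\COQ\cup\atoms,\emptyset)$ making $\exists\repairname$ $\NP$-hard via a reduction from $\textsc{3-Colourability}$. Because $\GAV\subseteq\fo$ and $\COQ\cup\atoms\subseteq\fo$, that same framework witnesses $\NP$-hardness in the FO setting of the theorem. The only nontrivial observation is the one in the first paragraph, namely that the minimality condition~\eqref{itm:def_repair_4} can be dropped for the purpose of the existence problem once~\eqref{itm:def_repair_0} is present; everything else is routine combination of data-complexity bounds for FO model checking with known reductions.
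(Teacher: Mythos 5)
Your proposal is correct and takes essentially the same approach as the paper: the paper likewise observes that $\SQL$ (and hence the minimality item) is irrelevant to existence, folds $\HQL^-$ into $\HQL^+$ via negation, proves $\NP$-membership by expressing ``some $\DB'$ over $\adom(\db)$ satisfies $\HICs$ and preserves the relativised hard-query answers'' as an existential second-order sentence evaluated on $\db$ (Fagin's theorem), which is exactly your guess-and-check algorithm, and gets hardness from Proposition~\ref{prop:existrepair(GAV,empty,COQ,empty)}. Your explicit finiteness argument for why the minimality condition can be dropped is a justification the paper leaves implicit, but it is not a different method.
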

\begin{proof}
    The $\NP$-hardness was shown already in Proposition \ref{prop:existrepair(GAV,empty,COQ,empty)}.
    Here we establish inclusion to $\NP$.
    Recall that, in the annotation unaware case, we may omit $\HQL^-$ and write $\HQL$ for $\HQL^+ \cup \{\neg \varphi \mid \varphi \in \HQL^-\}$.
    Fix a finite relational vocabulary $\tau_1=\{R_1,\dots, R_n\}$ of some schema $\schema$ and let $\tau_2=\{S_1,\dots, S_n\}$ denote a disjoint copy of $\tau_1$. We consider databases and repair notions with schema $\schema$.
    Let $\rep \dfn (\HICs, \emptyset, \HQL, \SQL)$ be a repair notion as described in the theorem.
Note first that $\SQL$ does not affect the existence of a repair, so we may assume that $\SQL= \emptyset$.
    Given a database $\db$ over $\schema$, let $(\HICs, \emptyset, \HQL^\db, \emptyset)$ be the repair notion, where $\HQL^\db$ is the relativisation of $\HQL$ to $\db$ computed from $\HQL$ and $\db$ in polynomial time.
    The existence of an $\rep$-repair of $\db$ can be reduced to model checking as follows:
    The database $\db$ has an $\rep$-repair if and only if $\db$ satisfies the formula
    \[
        \exists \vec{S} \Big(\bigwedge_{\varphi \in \HQL^\db} \forall \vec{x} \big( \varphi(\vec{x}) \rightarrow \varphi^*[\vec{S}/\vec{R}](\vec{x}) \big)\land \bigwedge_{\varphi\in \HICs} \varphi^*[\vec{S}/\vec{R}] \Big),
    \]
    where $\varphi^*[\vec{S}/\vec{R}]$ denotes the formula obtained from $\varphi$ by substituting $\vec{R}$ by $\vec{S}$ and bounding its first-order qualifications to the active domain of the repair obtained from the interpretations of  $\exists S_1 \dots \exists S_n$. The fact that the above model checking can be decided in $\NP$ with respect to the size of $\db$ follows essentially from Fagin's theorem; the fact that data complexity of existential second-order logic is in $\NP$ suffices. Note first that $S_1, \dots S_n$ are of polynomial size, since $n$ and their arities are constant. The fact that the satisfaction of the subsequent formula can be checked in $\NP$ follows from the fact that $\HQL^\db$ and $\HICs$ are polynomial size sets of $\fo$-formulae of constant size. Recall that $\rep$ is fixed and $\db$ is the input, and thus the size of $\varphi^*$ is constant as well.
\end{proof}
Note also that the above model checking problem could easily be transformed into an $\fo$ satisfiability problem.
Simply append the constructed formula above to the $\fo$ description of $\db$ and remove all (existential) second-order quantifiers that are interpreted existentially in the satisfiability problem.

\begin{lemma}\label{lem:exists repair to CQA}
    Let $\mathcal{R} = (\HICs, \SICs, \HQL, \SQL)$.
    Then $\exists\repairname$ is reducible in logarithmic space to complement problem of $\RCQA(q)$, where $q\in\COQ$.
\end{lemma}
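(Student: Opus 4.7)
The plan is to pick the conjunctive query $q(x_1, x_2) \dfn (x_1 = x_2)$ and to exploit the fact that, for any database $\DB'$, $\ans(\DB', q) = \{(a,a) \mid a \in \adom(\DB')\}$. Then, for any pair $(a, b) \in \adom(\DB)^2$ with $a \neq b$, the tuple $(a, b)$ automatically lies outside $\ans(\DB', q)$ for every $\DB'$; the condition ``$\vec{t} \notin \ans(\DB', q)$'' thereby becomes a trivially satisfied constraint on the repair $\DB'$.

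I would define the log-space reduction to map each input database $\DB$ with $|\adom(\DB)| \geq 2$ to the pair $(\DB, (a, b))$, where $a$ and $b$ are, say, the two lexicographically smallest distinct elements of $\adom(\DB)$; locating them requires only a single pass over the facts of $\DB$, which is doable in logarithmic space. Correctness then follows from the observation that the complement of $\RCQA(q)$ on input $(\DB, (a, b))$ asks whether some $\mathcal{R}$-repair $\DB'$ of $\DB$ satisfies $(a, b) \notin \ans(\DB', q)$; since that condition holds for every repair vacuously, the question collapses to ``does $\DB$ admit any $\mathcal{R}$-repair?'', which is exactly $\exists\repairname(\DB)$.

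The main obstacle is the corner case $|\adom(\DB)| < 2$, in which one cannot select two distinct elements from $\adom(\DB)$. Because such inputs admit only finitely many isomorphism types (for the fixed schema of $\mathcal{R}$), I would pre-decide $\exists\repairname(\DB)$ directly on them and have the reduction output a canonical ``yes'' or ``no'' instance of the complement of $\RCQA(q)$ accordingly; a canonical ``yes'' instance is guaranteed by the consistency of $\HICs$, while a canonical ``no'' instance is needed only when $\exists\repairname$ is not trivially satisfied for $\mathcal{R}$, in which case the required witness can be identified at reduction-construction time. A secondary subtlety to double-check is that $q$ indeed qualifies as a conjunctive query in the paper's sense, i.e.\ that equality atoms are admitted among the conjuncts; if the convention excludes them, the same idea can be recovered by padding $\DB$ with a dummy fact involving fresh elements and taking $q$ to be a purely relational atom that queries the dummy structure.
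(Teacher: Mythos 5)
Your reduction is correct and its core logic coincides with the paper's: map $\DB$ to an instance $(\DB,\vec t)$ of $\RCQA(q)$ for which $\vec t$ cannot be a consistent answer unless the repair set is empty, so that the complement of $\RCQA(q)$ asks precisely whether a repair exists. The paper does this in one line by taking the Boolean query $q=\exists x\, P(x)$ for a \emph{fresh} relation symbol $P$ (citing ten Cate et al.), whereas you take $q(x_1,x_2)=(x_1=x_2)$ with a tuple $(a,b)$, $a\neq b$. Your variant has the advantage that $(a,b)\notin\ans(\DB',q)$ holds for \emph{every} database unconditionally, so no argument about the structure of repairs is needed; the reduction does not even touch the schema. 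Two caveats, both of which you partly anticipated. First, under the paper's definitions $\atoms$ consists only of relational atoms $R(\vec x)$ and a conjunctive query is a conjunction of such atoms, so $x_1=x_2$ is \emph{not} in $\COQ$; your fallback (a fresh relational atom) is therefore not an optional contingency but the proof you actually need, and it is essentially the paper's construction. Second, once $q$ is a fresh relational atom $P(\vec x)$ (or $\exists x\,P(x)$), the ``never an answer'' property is no longer automatic: a repair is free to populate $P$. You must add the short pruning argument: since $P$ occurs in none of $\HICs$, $\HQL$, $\SQL$, emptying $P$ in any repair yields another repair (all six conditions of Definition~\ref{def:repair}, including minimality, are unaffected), so if any repair exists then one with $P=\emptyset$ exists, and $\vec t$ fails to be a consistent answer. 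Your handling of the corner case $\lvert\adom(\DB)\rvert<2$ is acceptable but becomes unnecessary in the fallback, since the fresh-atom version needs no pair of distinct domain elements.
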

\begin{proof}
    By ten Cate~et~al.~\cite[Thm.~7.1]{DBLP:journals/mst/CateFK15} the query $q = \exists x P(x)$ with a fresh relation $P$ suffices.
    That is, given an instance $\DB$, we have that $\top \in \CQA(q, \DB, \mathcal{R})$ if and only if $\DB$ has no repair.
\end{proof}

\begin{theorem}\label{thm:FOCQA}
    Let $\mathcal{R} = (\HICs, \emptyset, \HQL, \emptyset)$, where $\HICs, \HQL^+, \HQL^- \subseteq \fo$.
    Then $\RCQA(q)$ is in $\coNP$ for any $q\in\fo$. Moreover, the problem is $\coNP$-complete for some $\HICs \subseteq \GAV$, $\HQL^+\subseteq \COQ\cup\atoms$, $\HQL^-\subseteq \atoms$, and $q\in \COQ$.
\end{theorem}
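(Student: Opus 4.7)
The plan is to establish the upper bound by a guess-and-verify procedure for the complement problem and then derive the lower bound from Proposition~\ref{prop:existrepair(GAV,empty,COQ,empty)} via Lemma~\ref{lem:exists repair to CQA}.

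For inclusion in $\coNP$, I would argue as follows. The tuple $\vec{t}$ fails to be in the consistent answers of $q$ over $\DB$ iff there is an $\mathcal{R}$-repair $\DB'$ of $\DB$ with $\DB' \not\models q(\vec{t}/\vec{x})$. By condition~\eqref{itm:def_repair_0} of Definition~\ref{def:repair}, any candidate $\DB'$ satisfies $\adom(\DB') \subseteq \adom(\DB)$, so $|\DB'|$ is bounded polynomially in $|\DB|$ and can be guessed in nondeterministic polynomial time. Because $\SICs = \emptyset$ and $\SQL = \emptyset$, conditions \eqref{itm:def_repair_3} and \eqref{itm:def_repair_4} are vacuously satisfied. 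The remaining verifications, namely $\DB' \models \HICs$, $\BevalX{\phi}{\DB} \leq \BevalX{\phi}{\DB'}$ for every $\phi \in \HQL^+$, $\BevalX{\psi}{\DB} \geq \BevalX{\psi}{\DB'}$ for every $\psi \in \HQL^-$, and $\DB' \not\models q(\vec{t}/\vec{x})$, are each first-order model checks with formulas of constant size (the framework $\mathcal{R}$ and the query $q$ are fixed as part of the problem description), and hence computable in polynomial time. This yields an $\NP$ procedure for the complement of $\RCQA(q)$, putting $\RCQA(q)$ in $\coNP$.

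For $\coNP$-hardness I would simply chain two earlier results. Proposition~\ref{prop:existrepair(GAV,empty,COQ,empty)} establishes $\NP$-hardness of $\exists\repairname$ for a concrete framework $\mathcal{R}^\ast = (\HICs, \emptyset, \HQL, \emptyset)$ whose ingredients already sit inside the syntactic budgets claimed in the theorem, namely $\HICs \subseteq \GAV$, $\HQL^+ \subseteq \COQ \cup \atoms$, and $\HQL^- \subseteq \atoms$. Lemma~\ref{lem:exists repair to CQA} then reduces $\exists\repairname$ in logarithmic space to the complement of $\RCQA(q^\ast)$ with $q^\ast = \exists x\, P(x) \in \COQ$ for a fresh relation symbol $P$. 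Composing the two reductions yields the required $\coNP$-hardness of $\RCQA(q^\ast)$ for the framework $\mathcal{R}^\ast$, matching all four parameter constraints in the theorem statement.

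The bulk of the work is already done by the cited results, so I expect no substantial obstacle. The only subtlety worth explicit checking is that adjoining the fresh symbol $P$ of Lemma~\ref{lem:exists repair to CQA} does not violate the syntactic bounds on $\HICs$, $\HQL^+$, and $\HQL^-$; it does not, because $P$ occurs only in the query $q^\ast$ and not in any component of $\mathcal{R}^\ast$. A secondary point to verify cleanly is that the polynomial-time verifier for the complement in the upper bound really is polynomial in $|\DB|$ alone once $\mathcal{R}$ and $q$ are fixed, which follows from standard data complexity of $\FO$ model checking applied to the guessed $\DB'$.
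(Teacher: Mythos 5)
Your proof is correct and follows essentially the same route as the paper: the $\coNP$-hardness is obtained exactly as in the paper by composing Proposition~\ref{prop:existrepair(GAV,empty,COQ,empty)} with Lemma~\ref{lem:exists repair to CQA}, and your guess-and-verify argument for the complement is just the algorithmic face of the paper's reduction to universal second-order model checking, both resting on the key observation that with $\SQL=\emptyset$ there is no minimisation, so a counterexample repair is itself a polynomial-size certificate. The only point worth making explicit is that the hard-query checks must range over all (polynomially many) instantiations $\vec{a}$ of the free variables from $\adom(\db)$, i.e., over the relativised set $\HQL^\db$, which stays polynomial because the arities are fixed.
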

\begin{proof}
   Hardness for $\coNP$ follows from Proposition~\ref{prop:existrepair(GAV,empty,COQ,empty)} and Lemma~\ref{lem:exists repair to CQA}, we proceed to show inclusion.
   Recall that, in the annotation unaware case, we may omit $\HQL^-$ and write $\HQL$ for $\HQL^+ \cup \{\neg \varphi \mid \varphi \in \HQL^-\}$.
   Fix a finite relational vocabulary $\tau_1=\{R_1,\dots, R_n\}$ of some schema $\schema$ and let $\tau_2=\{S_1,\dots, S_n\}$ denote a disjoint copy of $\tau_1$. We consider databases and repair notions with schema $\schema$.
    Let $\rep \dfn (\HICs, \emptyset, \HQL, \emptyset)$ be a repair notion as described in the theorem and $q(\vec{x}) \in \fo$.
    Given a database $\db$ over $\schema$, let $\HQL^\db$ be the relativisation of $\HQL$ to $\db$ computed from $\HQL$ and $\db$ in polynomial time.
    $\RCQA(q)$ for $\db$ can be reduced to model checking as follows:
A tuple $\vec{t}$ is in the consistent answers of $q(\vec{x})$ in $\db$ with respect to $\rep$ if and only if
\begin{equation*}
    \db \models \forall S_1\dots \forall S_n \Big( \Big( \bigwedge_{\varphi \in \HQL^\db} \forall \vec{x} \big( \varphi(\vec{x}) \rightarrow \varphi^*[\vec{S}/\vec{R}](\vec{x}) \big) \land \bigwedge_{\phi\in \HICs} \phi^*[\vec{S}/\vec{R}] \Big) \rightarrow q^*[\vec{S}/\vec{R}](\vec{t}/\vec{x}) \Big),
\end{equation*}
where $\phi^*[\vec{S}/\vec{R}]$ and $q^*[\vec{S}/\vec{R}]$ are as defined in the proof of Theorem \ref{thm:existrepair(fo,empty,fo,any)}. The fact that the above model checking can be done in $\coNP$ is proven analogously as the inclusion to $\NP$ was done in Theorem \ref{thm:existrepair(fo,empty,fo,any)}, with the distinction that data complexity of universal second-order logic is in $\coNP$.
\end{proof}

\begin{theorem}\label{thm:FOCQA2}
    Let $\mathcal{R} = (\HICs, \emptyset, \HQL, \SQL)$, where $\HICs, \HQL^+, \HQL^-, \SQL \subseteq \fo$.
    Then $\RCQA(q)$ is in $\Theta^p_2$ for all $q \in \fo$. 
    Moreover, $\RCQA(q)$ is $\Theta^p_2$-hard for some  $q\in\COQ$ and $\mathcal{R} = (\HICs, \emptyset, \HQL, \SQL)$, where $\HICs$ is a set of tgds, $\HQL^+, \HQL^-, \SQL \subseteq \atoms$.
\end{theorem}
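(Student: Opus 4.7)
The plan is to establish the upper bound via binary search with an $\NP$-oracle, and the lower bound via a Lopatenko--Bertossi-style reduction from a canonical $\Theta^p_2$-complete problem.

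\emph{Upper bound.} Since $\mathcal{R}$ is fixed and $\DB$ is the input, the relativised set $\SQL^{\DB}$ of active-domain instantiations has size polynomial in $|\DB|$, and in the annotation-unaware case each summand $\bigl|\BevalX{\phi}{\DB} - \BevalX{\phi}{\DB'}\bigr|$ lies in $\{0,1\}$. Hence $d_{\SQL}(\DB, \DB') \le |\SQL^{\DB}| \le p(|\DB|)$ for some polynomial $p$. The auxiliary decision problem ``given $\DB$ and $k$, does there exist $\DB'$ with $\adom(\DB') \subseteq \adom(\DB)$ that satisfies $\HICs$ and the hard-query conditions, and has $d_{\SQL}(\DB, \DB') \le k$?'' is in $\NP$ by essentially the model-checking argument of Theorem~\ref{thm:existrepair(fo,empty,fo,any)} (guess the polynomial-size $\DB'$; verify the fixed $\fo$ sentences in polynomial time). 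Binary search over $k\in\{0,\dots,p(|\DB|)\}$ using this $\NP$-oracle identifies the minimum achievable distance $d^*$ in $O(\log|\DB|)$ adaptive queries. One final $\NP$-oracle call decides whether there exists a $\DB'$ with $d_{\SQL}(\DB, \DB') \le d^*$ satisfying the hard constraints and hard queries but with $\vec{t} \notin \ans(\DB', q)$; the negation of this answer is exactly $\vec{t} \in \CQA(q, \DB, \mathcal{R})$. Hence $\RCQA(q) \in \PTIME^{\NP[\log]} = \Theta^p_2$.

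\emph{Lower bound.} For hardness I would reduce from a canonical $\Theta^p_2$-complete problem such as \textsc{Odd-Max-Sat}: given a CNF formula $\phi$ with clauses $C_1,\dots,C_m$ over variables $x_1,\dots,x_n$, decide whether the maximum number of simultaneously satisfiable clauses of $\phi$ is odd. Adapting the Lopatenko--Bertossi blueprint for cardinality-based CQA, I would build a database $\DB$ with several \emph{frozen} relations (storing the variables, clauses, and variable--clause incidence with polarity) and one \emph{free} ``assignment'' relation. Tgds in $\HICs$ propagate truth-witnesses: assigning $x_i$ positively derives ``satisfied'' facts for each clause in which $x_i$ appears positively, and symmetrically for the negative case. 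Placing the atomic queries naming the frozen relations in both $\HQL^+$ and $\HQL^-$ forces those relations to remain pointwise identical between $\DB$ and $\DB'$. The set $\SQL$ is chosen as atomic queries of the form ``clause $c$ is satisfied'', so that $d_{\SQL}(\DB,\DB')$ equals the number of clauses the assignment encoded in $\DB'$ fails to satisfy. Minimum-distance repairs then correspond exactly to maximum-satisfying assignments and $d^* = m - \mathrm{MAX\text{-}SAT}(\phi)$. A conjunctive query $q$ combined with a parity gadget encoded in $\DB$ (for instance a pairing structure that exposes a distinguished tuple only when the satisfaction count is odd) is then designed so that a fixed $\vec{t}$ lies in $\ans(\DB', q)$ for every minimum-distance $\DB'$ iff $\mathrm{MAX\text{-}SAT}(\phi)$ is odd.

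\emph{Main obstacle.} The principal difficulty is the lower bound. Since tgds admit neither negation nor equality in their heads, the constraint ``each variable receives exactly one truth value'' cannot be imposed directly by $\HICs$; instead one must ensure that any repair deviating from the intended assignment semantics is forced to pay strictly more than $d^*$ in $d_{\SQL}$, using the hard queries $\HQL^\pm$ and auxiliary penalty tuples in $\DB$. Combining this with a parity gadget realised by a conjunctive query, while keeping the set of integrity constraints purely within tgds, is the main technical hurdle and is where the proof departs most substantially from the egd/denial-constraint reductions known in the cardinality-based CQA literature.
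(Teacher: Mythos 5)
Your upper bound is essentially the paper's argument: an $\NP$-oracle auxiliary problem asking for a constraint-satisfying candidate at a given distance, binary search over a polynomially bounded range to locate the minimum $d^*$, and one final oracle call to test whether some optimal candidate avoids $q(\vec{t})$, giving $\PTIME^{\NP[\log]}=\Theta^p_2$. This matches the paper's auxiliary problem \textsc{RCE} and its use almost verbatim, and is correct.

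The lower bound, however, has a genuine gap, and it is exactly the one you flag yourself. Reducing from \textsc{Odd-Max-Sat} requires a gadget that makes the \emph{parity} of the maximum number of satisfied clauses visible to a fixed conjunctive query in every minimum-distance repair, and you never construct it --- you only assert that $q$ ``is then designed so that'' this works. This is not a routine detail: a conjunctive query is monotone and existential, tgds can only force the \emph{presence} of tuples, and parity is not a monotone function of the satisfaction count, so the standard trick of letting minimality of a soft query suppress an indicator relation (which works for properties like ``no witness of non-injectivity exists'') does not obviously extend to ``the count is odd.'' Since the entire $\Theta^p_2$-hardness rests on this gadget, the hardness direction is not established. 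The paper avoids the issue by choosing a different $\Theta^p_2$-complete source problem, \textsc{Max-True-3SAT-Equality} (are the maximum numbers of $1$'s in satisfying assignments of two given 3SAT formulas equal?): soft queries on a doubled ``zero-assignment'' relation force optimal repairs to encode maximum-weight satisfying assignments of both formulas, tgds build a surjection $F$ between the $1$'s of the two assignments and populate an indicator relation $A$ whenever $F$ fails to be injective, and minimality on $A$ ensures $A$ is empty exactly when a bijection exists; the query $q=\exists s\, A(s)$ then decides equality of the two maxima. Comparing two maxima via an injectivity test is realisable with tgds and minimality, whereas a parity test is not known to be, so if you want to keep your route you would need to supply that gadget explicitly --- otherwise switch to an equality-of-maxima style source problem as the paper does.
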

\begin{proof}
    First, we show $\Theta^p_2$ membership.
    Fix the repair framework $\mathcal{R}$ and query $q$.
    Let $\langle \DB, \bar t\rangle$ be an instance of $\RCQA(q)$.
    Define the following auxiliary problem:
\newcommand{\auxproblemname}{\textsc{RCE}}
    \begin{center}
        \begin{tabular}{@{\ }r@{\ }p{10.5cm}@{\ }}
            \toprule
            \textbf{Problem:} & \auxproblemname{} (\textsc{RepairCandidateExistence})\\
            \midrule
            \textbf{Instance:} & Database $\DB$, $n \in \mathbb N$ and $\bar t \in \adom(D)$ \\
            \textbf{Question:} & Does a database $\DB'$ exist such that $\DB' \models \HICs$, $\ans(\DB, \phi) \subseteq \ans(\DB', \phi)$ for all $\phi \in \HQL$, $d_{\SQL}(\DB, \DB') = n$ and $\DB' \models \lnot q(\bar t)$, if $\bar t \neq \emptyset$\\
            \bottomrule
        \end{tabular}
    \end{center}
    It is easy to see that $\auxproblemname \in \NP$.
    
    Now, $\RCQA(q)$ can be decided in polynomial time with an \auxproblemname-oracle.
    First, use binary search over the interval $[0,|\DB|]$ to find the smallest $n_0$ such that $\langle \DB, n_0, \emptyset \rangle \in \auxproblemname$.
    Second, $\langle \DB, \bar t\rangle \in \RCQA(q)$ if and only if $\langle \DB, n_0, \bar t \rangle \not\in \auxproblemname$.
    That is, $\bar t$ is a consistent answer of $q$ if and only if $\lnot q(\bar t)$ is not true in any repair $\DB'$ with a minimal distance from $\DB$.
    The first step clearly needs $\log$-many oracle calls, while the second step only needs one.

\newcommand{\maxtruethreesat}{\textsc{Max-True-3SAT-Equality}}
    For hardness we reduce from the complement of \maxtruethreesat{} which is $\Theta_2^P$-com\-plete~\cite{DBLP:conf/fsttcs/SpakowskiV00}.
    \begin{center}
        \begin{tabular}{@{\ }r@{\ }p{10.5cm}@{\ }}
            \toprule
            \textbf{Problem:} & \maxtruethreesat\\
            \midrule
            \textbf{Instance:} & Two 3SAT formulas $\varphi_0$ and $\varphi_1$ having the same number of clauses and variables. \\
            \textbf{Question:} & Is the maximum number of 1's in satisfying truth assignments for $\varphi_0$ equal to that for $\varphi_1$?\\
            \bottomrule
        \end{tabular}
    \end{center}

    Let $\HICs = \{\psi_{a_1a_2a_3}^i, \psi_1^i, \psi_2^i, \psi_3^i, \psi_4, \psi_5 \mid i, a_1, a_2, a_3 \in \{0,1\}\}$, where
    \begin{align*}
        \psi_{a_1a_2a_3}^i =&\ R^i_{a_1a_2a_3}(x_1, x_2, x_3) \to \exists v_1, v_2, v_3 I^i(x_1, v_1) \land I^i(x_2, v_2) \land I^i(x_3, v_3) \land T_{a_1a_2a_3}(v_1, v_2, v_3), \\
        \psi_1^i =&\ I^i(x, v_1) \land I^i(x, v_2) \to E(v_1, v_2), \\
        \psi_2^i =&\ I^i(x, 0) \to \hat{I}(x, i, 1) \land \hat{I}(x, i, 2), \\
        \psi_3^i =&\ I^i(x_i, 1) \to \exists x_{1-i} F(x_0, x_1) \land I^i(x_{1-i}, 1), \\
        \psi_4 =&\ F(x_1, x_2) \land F(x_1, x_3) \land D(x_2, x_3) \to \exists s A(s), \\
        \psi_5 =&\ F(x_1, x_2) \land F(x_3, x_2) \land D(x_1, x_3) \to \exists s A(s).
    \end{align*}
    Furthermore let 
    \begin{align*}
        \HQL^+ &\coloneqq \{E(x, y), D(x, y), R^i(x, y, z), T(x, y, z), \mid i \in\{0,1\}\}\\
        \HQL^- &\coloneqq \{E(x, y), D(x, y), R^i(x, y, z), T(x, y, z), I^i(x, y) \mid i \in\{0,1\}\}\\
        \SQL &\coloneqq \{\hat{I}(x, y), A(s)\}
    \end{align*}
    and $q = \exists s A(s)$.

    Let $\langle \varphi_0, \varphi_1 \rangle$ be an instance of \maxtruethreesat, we define the  database instance $\DB$ as:
    \begin{align*}
        T_{a_1a_2a_3} &= \{(v_1, v_2, v_3) \mid v_i \in \{0, 1\}, (v_1, v_2, v_3) \neq (a_1, a_2, a_3)\} \\
        R_{a_1a_2a_3}^i &= \{(x_1, x_2, x_3) \mid x_1 \lor x_2 \lor x_3 \text{ is a clause of } \varphi_i \text{ and} \operatorname{is}(x_i) = a_i\} \\
        I^i &= \{(x, 0), (x, 1) \mid x \in \varphi_i\} \\
        E &= \{(0,0), (1,1)\} \\
        D &= \{(x_1, x_2) \mid x_1, x_2 \in \varphi_0, x_1 \neq x_2\} \cup \{(x_1, x_2) \mid x_1, x_2 \in \varphi_1, x_1 \neq x_2\} \\
        F &= A = \hat{I} = \emptyset 
    \end{align*}
    The intuition of the relational symbols are as follows: $T_{a_1a_2a_3}$ encodes all satisfying assignments of clauses, $R_{a_1a_2a_3}^i$ contains the clauses of $\varphi_i$, $I^i$ are the assignments for $\varphi_i$, $E$ is equality between 0 and 1, $D$ encodes inequality between variables, $F$ will contain a surjective map between the 1's in $I^0$ and $I^1$, $A$ will contain a value if $F$ is not injective and $\hat{I}$ will contain every 0 assignment in $I^i$ twice to simulate an increased ``weight'' for the soft queries.    
    
    Now, the intuition for the hard constrains is as follows.
    First notice, that because of $\psi_1^i$ the relation $I^i$ cannot contain both $(x,0)$ and $(x,1)$ for all $x \in \varphi_i$. 
    Furthermore $\psi_{a_1a_2a_3}^i$ forces $I^i$ to contain an assignment satisfying $ \varphi_i$.
    Next, $\psi_2^i$ duplicates and doubles the assignments set to $0$ in $I^i$ into $\hat{I}$.
    The constraint $\psi_3^i$ establishes a surjective map in $F$ between the 1's in $I^0$ and $I^1$.
    Finally, $\psi_4$ and $\psi_5$ create a value in $A$ if $F$ is not injective.

    As for the soft queries, since $\hat{I}$ and $A$ start empty, a repair aims to minimize adding new facts to the relations.
    Because $\hat{I}$ contains two facts for each variable assigned to zero this leads to repairs maximizing the number of 1's in their satisfying assignments.
    The inclusion of $A$ in the soft queries simply ensures that it contains a values only if constraint $\psi_4$ or $\psi_5$ take effect.
        
    We now show that $\langle \DB, \emptyset \rangle \in \RCQA(q)$ if and only if $\langle \varphi_0, \varphi_1 \rangle \not\in \maxtruethreesat$.
    Assume $\langle \varphi_0, \varphi_1 \rangle \in \maxtruethreesat$ and $\DB$ is constructed as described above.
    First note, that repairs of $\DB$ maximize the number of 1's in satisfying assignments, by having $\hat{I}$ as a soft query.
    Then, because of $\langle \varphi_0, \varphi_1 \rangle \in \maxtruethreesat$, there is a one-to-one mapping between the 1's in the maximum assignments for $\varphi_0$ and $\varphi_1$.
    A repair then has this bijective map encoded in $F$, which means that $A$ can be empty. 
    Since repairs are minimal, $A$ must indeed be empty, so $q$ must be false in all repairs, therefore $\langle \DB, \emptyset \rangle \not\in \RCQA(q)$.

    Now for the other direction.
    Assume $\langle \varphi_0, \varphi_1 \rangle \not\in \maxtruethreesat$ and $\DB$ is again constructed as described above.
    Now maximizing the number of 1's in satisfying assignments does not lead to a bijective map, so $\psi_4$ or $\psi_5$ takes effect and $A$ is not empty in all repairs. 
    Therefore $\langle \DB, \emptyset \rangle \in \RCQA(q)$.    
\end{proof}

\end{document}